\newcommand{\dom}{\mathrm{dom}\,}
\newcommand{\algoplus}{\mathop{\hat{\bigoplus}}}
\newcommand{\algotimes}{\mathop{\hat{\bigotimes}}}
\newtheorem{thm}{Theorem.}[section]
\newtheorem{cor}[thm]{Corollary.}
\newtheorem{lem}[thm]{Lemma.}
\newtheorem{prop}[thm]{Proposition.}
\newtheorem{rem}[thm]{Remark.}
\newtheorem{fact}[thm]{Fact.}
\newtheorem{assump}{Assumption.}[section]
\newcommand{\rmb}{\mathrm{b}}
\newcommand{\rme}{\mathrm{e}}
\newcommand{\rmRe}{\mathrm{Re}}
\newcommand{\bbZ}{\mathbb{Z}}
\newcommand{\bbR}{\mathbb{R}}
\newcommand{\bbC}{\mathbb{C}}
\newcommand{\bbB}{\mathbb{B}}
\newcommand{\calA}{\mathcal{A}}
\newcommand{\calB}{\mathcal{B}}
\newcommand{\calD}{\mathcal{D}}
\newcommand{\calF}{\mathcal{F}}
\newcommand{\calK}{\mathcal{K}}
\newcommand{\calM}{\mathcal{M}}
\newcommand{\abs}[1]{|#1|}
\newcommand{\norm}[1]{\Vert#1\Vert}
\newcommand{\rbk}[1]{\left(#1\right)}
\newcommand{\sqbk}[1]{\left[#1\right]}
\newcommand{\cbk}[1]{\left\{#1\right\}}
\newcommand{\bkt}[2]{\langle#1,\,#2\rangle}
\newcommand{\set}[2]{\left\{#1 : #2\right\}}
\newcommand{\sumtwo}[2]%
{\mathop{\sum_{#1}}_{#2}}
\newcommand{\sumthree}[3]%
{\mathop{\mathop{\sum_{#1}}_{#2}}_{#3}}
\newcommand{\sumfour}[4]%
{\mathop{\mathop{\mathop{\sum_{#1}}_{#2}}_{#3}}_{#4}}
\newcommand{\calFL}{\mathcal{F}_L}
\newcommand{\calHe}{\mathcal{H}_{\mathrm{e}}}
\newcommand{\calMe}{\mathcal{M}_{\mathrm{e}}}
\newcommand{\calHb}{\mathcal{H}_{\mathrm{b}}}
\newcommand{\calFb}{\mathcal{F}_{\mathrm{b}}}
\newcommand{\calMb}{\mathcal{M}_{\mathrm{b}}}
\newcommand{\calFbfin}{\mathcal{F}_{\mathrm{b,fin}}}
\newcommand{\calFbL}{\mathcal{F}_{\mathrm{b}, L}}
\newcommand{\calHbL}{\mathcal{H}_{\mathrm{b}, L}}
\newcommand{\GL}{\Gamma_L}
\newcommand{\GLd}{\GL^d}
\newcommand{\Ne}{N_{\mathrm{e}}}
\newcommand{\Nb}{N_{\mathrm{b}}}
\newcommand{\NbLk}{N_{\mathrm{b},L}^{\kappa}}
\newcommand{\Nbzerok}{N_{\mathrm{b},0}^{\kappa}}
\newcommand{\Nbirabk}{N_{\mathrm{b},\mathrm{ir},\alpha,\beta}^{\kappa}}
\newcommand{\NbLone}{N_{\mathrm{b},L,1}}
\newcommand{\fLk}{f_L^{\kappa}}
\newcommand{\rhobLk}{\rho_{\mathrm{b}, L}^{\kappa}}
\newcommand{\rhobLzerok}{\rho_{\mathrm{b},L,0}^{\kappa}}
\newcommand{\rhobLone}{\rho_{\mathrm{b},L,1}}
\newcommand{\Ni}{N_{\mathrm{i}}}
\newcommand{\rhobLiralphakappa}{\rho_{\mathrm{b},L,\mathrm{ir},\alpha}^{\kappa}}
\newcommand{\brhobkappa}{\bar{\rho}_{\mathrm{b}}^{\kappa}}
\newcommand{\mub}{\mu_{\mathrm{b}}}
\newcommand{\Hmk}{H_{\mub}^{\kappa}}
\newcommand{\He}{H_{\mathrm{e}}}
\newcommand{\tHek}{\tilde{H}_{\mathrm{e}}^{\kappa}}
\newcommand{\Hfrm}{H_{\mathrm{fr},\mub}}
\newcommand{\tHfrmk}{\tilde{H}_{\mathrm{fr},\mub}^{\kappa}}
\newcommand{\HIk}{H_{\mathrm{I}}^{\kappa}}
\newcommand{\Hbm}{H_{\mathrm{b}, \mub}}
\newcommand{\lambdaxk}{\lambda_x^{\kappa}}
\newcommand{\Sk}{S^{\kappa}}
\newcommand{\Vk}{V^{\kappa}}
\newcommand{\Rk}{R^{\kappa}}
\newcommand{\Ae}{A_{\rme}}
\newcommand{\Ab}{A_{\rmb}}
\newcommand{\tnekf}{\tilde{n}_{\rme}^{\kappa} (f)}
\newcommand{\Tr}{\mathrm{Tr}}
\newcommand{\Tre}{\mathrm{Tr}_{\mathrm{e}}}
\newcommand{\Trb}{\mathrm{Tr}_{\mathrm{b}}}
\newcommand{\psiebk}{\psi_{\mathrm{e},\beta}^{\kappa}}
\newcommand{\Psiebk}{\Psi_{\mathrm{e},\beta}^{\kappa}}
\newcommand{\psibm}{ \psi_{\beta, \mu_{\mathrm{b}}}^{\kappa}}
\newcommand{\psibmk}{\psi_{\beta,\mub}^{\kappa}}
\newcommand{\ZbLmk}{Z_{\beta,L,\mub}^{\kappa}}
\newcommand{\tpsibmk}{\tilde{\psi}_{\beta,\mub}^{\kappa}}
\newcommand{\tpsiebk}{\tilde{\psi}_{\mathrm{e},\beta}^{\kappa}}
\newcommand{\Zebk}{Z_{\rme,\beta}^{\kappa}}
\newcommand{\psibfrbm}{\psi_{\mathrm{b},\mathrm{fr},\beta,\mub}}
\newcommand{\ZbfrLm}{Z_{\rmb,\mathrm{fr},L,\beta}}
\newcommand{\yLk}{y_L^{\kappa}}
\newcommand{\rbL}{r_{\mathrm{b}, L}}
\newcommand{\RbL}{R_{\mathrm{b}, L}}
\newcommand{\Rd}{\mathbb{R}^d}
\newcommand{\ep}{\varepsilon}
\newcommand{\rhobfr}{\rho_{\mathrm{b}, \mathrm{fr}}}
\newcommand{\rhobcfr}{\rho_{\mathrm{b}, \mathrm{c}, \mathrm{fr}}}
\newcommand{\rhobckappa}{\rho_{\mathrm{b}, \mathrm{c}}^{\kappa}}
\newcommand{\betac}{\beta_{\mathrm{c}}}
\newcommand{\Tc}{T_{\mathrm{c}}}
\newcommand{\FbAW}{\mathcal{F}_{\mathrm{b}}^{\mathrm{AW}}}
\newcommand{\Omegab}{\Omega_{\mathrm{b}}}
\newcommand{\OmegabAW}{\Omega_{\mathrm{b}}^{\mathrm{AW}}}
\newcommand{\Lbfr}{L_{\mathrm{b}, \mathrm{fr}}}
\newcommand{\Wrhol}{W_{\rho, \mathrm{l}}}
\newcommand{\phirhol}{\phi_{\rho, \mathrm{l}}}
\newcommand{\Lfr}{L_{\mathrm{fr}}}
\newcommand{\Ccob}{C_{c, \omega, \beta}^{-1/2}}
\newcommand{\brhob}{\bar{\rho}_{\mathrm{b}}}
\newcommand{\rhobc}{\rho_{\mathrm{b, c}}}
\newcommand{\rhobzero}{\rho_{\mathrm{b}, 0}}
\newcommand{\psibbecbfr}{\psi_{\mathrm{b, BEC}, \beta, \mathrm{fr}}}
\newcommand{\psibbecbfrsecond}{\psi_{\mathrm{b, BEC}, \beta, 2}}
\newcommand{\calDbone}{\calD_{\mathrm{b}, 1}}
\newcommand{\tpsibLnm}{\tilde{\psi}_{\beta,L_n,\mub}}
\newcommand{\tpsiebm}{\tilde{\psi}_{\mathrm{e},\beta,\mub}}
\newcommand{\WB}{W^{\mathrm{B}}}
\newcommand{\tne}{\tilde{n}_{\mathrm{e}}}
\newcommand{\tpsibeta}{\tilde{\psi}_{\beta}}
\newcommand{\psibeta}{\psi_{\beta}}
\newcommand{\tHfr}{\tilde{H}_{\mathrm{fr}}}
\newcommand{\psiebeta}{\psi_{\mathrm{e}, \beta}}
\newcommand{\tHe}{\tilde{H}_{\mathrm{e}}}
\newcommand{\Hbfr}{H_{\mathrm{b, fr}}}
\newcommand{\Omegabg}{\Omega_{\mathrm{b,g}}}
\newcommand{\calMbzero}{\mathcal{M}_{\mathrm{b},0}}
\newcommand{\psibfrbrtheta}{ \psi_{\mathrm{b, fr}, \beta}^{r, \theta}}
\newcommand{\psibfrbone}{\psi_{\mathrm{b, fr}, \beta, 1}}
\newcommand{\FbBEC}{ \calF_{\mathrm{b, BEC}}}
\newcommand{\WrholBEC}{W_{\rho, \mathrm{l, BEC}}}
\newcommand{\WbBEC}{\Omega_{\mathrm{b, BEC}}}
\newcommand{\polarplane}{[0, \infty) \times [0, 2 \pi]}
\newcommand{\OmegabBEC}{\Omega_{\mathrm{b, BEC}}}
\newcommand{\psibBEC}{\psi_{\mathrm{b,BEC}}}
\newcommand{\Wbl}{W_{\beta, \mathrm{l}}}
\title{Phonon Bose-Einstein condensation in a Hubbard-phonon interacting system with infrared divergence}
\author{Yoshitsugu Sekine}
\date{2013-08-26}
\begin{document}

\maketitle

\setcounter{tocdepth}{3}
\tableofcontents
\vspace*{1cm}

\begin{abstract}
 We show that a finite Hubbard-phonon interacting system exhibits phonon BEC at sufficiently low temperature.
 We also have the gauge symmetry breaking for phonons.
 The key tools are a unitary transformation introduced by Arai and Hirokawa \protect{\cite{AH1}} and the Araki-Woods representation.
 This system is essentially the same as a free system or the van Hove model.
\end{abstract}

\section{Introduction}
\label{sec-1}

An electron-phonon interacting system is an important model for quantum statistical mechanics and quantum field theory,
and is fundamental for condensed matter physics.
Nevertheless there are few mathematically rigorous researches.
We have had several fundamental development in quantum statistical mechanics
and quantum field theory, for example, the Pauli-Fierz model for quantum electrodynamics
and the Nelson model for a scalar neutron and bosons \cite{A2, A10, BFS1, BFS2, BFS3, C, F, H1, H2, H3, GLL, HHS, LMS}.
The analysis for the Nelson model includes some type of an electron-phonon system, but we have few results on many body electron systems.
We have many body electron results for the Pauli-Fierz model \cite{GLL}: but it is a stability of matter problem and the existence of ground states.
This is, of course, a very interesting and important result for condensed matter physics.
However I would like to analyze the many body electron problems such as phase transition, especially ferromagnetism.
This is proved in my previous paper \cite{SY}.
In that paper we proved the existence of ferromagnetic phase transition for a finite Hubbard-phonon interacting system at absolute zero temperature.

We continue analyzing an electron-phonon interacting system in this paper,
in particular, a finite Hubbard-phonon interacting system at finite temperature.
We cannot expect (ferromagnetic) phase transition for a finite Hubbard system
at finite temperature, so our main concern is the analysis for bosons.
We can expect the occurance of Bose-Einstein condensation (BEC)
for a boson system, and we will actually prove it in this paper.

Araki-Woods' result \cite{AW} is also fundamental for mathematical analysis of BEC in this paper.
Our analysis reduces the interacting system to the free one, so we can use the results summarized in \cite{A11}.

Our result includes the exitence of equilibrium states and
we have had several fundamental results \cite{BFS4, DJ, DJP, JP1, JP2} in this direction.
In particular the results by Derezi\'nski, Jak\v{s}i\'c and Pillet include our existence result
because they treat the generalized model.
However our result is different in some points and stronger.
Firstly ours and their physical motivation are totally different:
their study focuses general phenomena, return to equilibrium for a small system coupling to a heat bath,
but our study focuses the specific condensed matter phenomenon, BEC.
In particular sometimes they explicitly assume the non-existence of BEC \cite{JP2}.
Secondly we prove the existence under weaker conditions for infrared regularization, which is due to specification of the model.

Moreover our BEC analysis is a new feature: the analysis for phonons.
We have much detailed mathematical analysis for boson only systems \cite{LSSY},
but few results for fermion-boson interacting system, as far as the author knows.
We can also prove the spontaneously gauge symmetry breaking.

For phonons we prove more: infrared divergence and BEC are not related.
Of course infrared divergence for phonons is due to the interaction with electrons
and BEC is the phenomenon of a bosonic system.
However, as far as the author knows, we have no rigorous results in this direction.
Note that we may have some relation between these if we consider infinite Hubbard-phonon system:
See the remark \ref{rem_for_thermodynamic_limit}.

Our next studies will be in several directions:
\begin{enumerate}
\item the analysis for an infinite Hubbard-phonon system (using an operator-algebraic method),
\item the one for this system using functional integration,
\item construction of ``Sobolev representations''.
\end{enumerate}
The first is natural direction to study.
The infinite system is expected to exhibit phase transition, ferromagnetism or superconductivity,
and infrared divergence by spin wave excitation by electrons,
so our system will have two phase transitions and two infrared divergence.
The second is interesting in view of mathematics.
Our system is relatively easy to understand because of the special form of interaction.
Hence it will be easy to compare the results by other methods.
The final one is also mathematical and needs explanation.

We review the theory of distribution: quantum mechanics introduces distributions, such as Dirac's delta function.
Take an approximation sequence, $\psi_t(x) := e^{-x^2 / 2 t} / \sqrt{2 \pi t}$.
The functions $\psi_t$, $t > 0$ are all in $L^2 (\mathbb{R})$,
but the limit $\lim_{t \downarrow 0} \psi_t = \delta$ is not in $L^2 (\mathbb{R})$.
Dirac's delta is in the larger space, $\mathcal{S}'(\mathbb{R})$.
We face this type of phenomena in quantum field theory, that is, infrared divergence.
Take ground states $\Psi_{\kappa}$, $\kappa > 0$ in the original Hilbert space (Fock space), where $\kappa$ is infrared cutoff.
The vectors $\Psi_{\kappa}$ has no limit in the Fock space if infrared divergence occurs for the Nelson model.
However we expect to have a limit in some larger space.
For example, set states $\psi_{\kappa} (\cdot) := \bkt{\Psi_{\kappa}}{\cdot \Psi_\kappa}$ on the $C^*$-algebra $\mathbb{B}(\mathcal{H})$ and
take a limit in the state space.
So we can seem the set of states on a $C^*$-algebra ``the space of distribution for quantum field theory''.
But this space is too large so we want smaller and more useful spaces like Sobolev spaces in quantum mechanics.
They will be representations rather than spaces in the theory of operator algebras.
Araki-Woods' representation is one of the ``Sobolev representations'', I think, but we need more such representations.
In particular, for absolute zero temperature situation, Araki-Woods' representation has no meaning;
hence we must search Araki-Woods' representation for absolute zero temperature.
\section{Mathematical settings}
\label{sec-2}

In this section we summarize the mathematical settings and notations. Our Hilbert spaces are as follows.

\begin{align}
    \calFL
    &:=
    \calHe \bigotimes \calFbL, \\
    \calHe
    &:=
    \bigotimes_{\mathrm{as}}^{\Ne} \ell^2 \rbk{\Lambda; \bbC^2}, \\
    \calFbL
    &:=
    \calFb \rbk{\calHbL}, \quad \calHbL := L^2(C_L; \bbC^{\Ni}), \quad C_L := \sqbk{- \frac{L}{2}, \frac{L}{2}}^d,
\end{align}
where $\Ne$ and $\Ni$ are positive integers, $L$ is a positive real number, and $\Lambda$ is a finite set.
The symbol $\bigotimes$ is an ordinary tensor product and
$\bigotimes_{\mathrm{s/as}}$ is a symmetric/anti-symmetric tensor product.
A Hilbert space $\calFb \rbk{\calK}$, called a bosonic Fock space, for a Hilbert space $\calK$ is defined by
\begin{align}
    \calFb \rbk{\calK}
    :=
    \bigoplus_{n=0}^{\infty} \bigotimes_{\mathrm{s}}^{n} \calK.
\end{align}
The following Hilbert spaces are for thermodynamic limit spaces:
\begin{align}
    \calF
    &:=
    \calF_{\infty}
    =
    \calHe \bigotimes \calFb, \quad
    \calFb
    :=
    \calFb \rbk{\calHb}, \quad
    \calHb
    :=
    L^2 \rbk{\bbR^{d} ; \bbC^{\Ni}}.
\end{align}

Next we define the Hamiltonians.
We formally define operators for thermodynamic limit spaces, such as $\calF$.
For subspaces $\calFL$, we define them by restricting domains properly,
where we use the property $L^2(C_L) \subset L^2(\bbR^d)$.
\begin{align}
    \Hmk
    &:=
    \Hfrm + \alpha \HIk, \\
    \Hfrm
    &:=
    \He \otimes 1 + 1 \otimes \Hbm, \\
    \HIk
    &:=
    \sum_{x \in \Lambda} n_x \otimes \phi \rbk{\lambdaxk},
\end{align}
where $\kappa \geq 0$ is a infrared cutoff parameter, $\alpha \in \bbR$ is a coupling constant,
and $\mu_{\mathrm{b}} \in \bbR$ is a chemical potential for bosons.
The self-adjoint operators $\phi (\lambdaxk)$ are Segal's field operators
and functions $\lambdaxk \in \calHb$ are infrared cutoff functions.
The Hamiltonians for electrons and phonons are defined by
\begin{align}
    \He
    &:=
    d \Gamma_{\mathrm{e}} \rbk{T} + U \sum_{x \in \Lambda} n_{x,+} n_{x,-}, \\
    \Hbm
    &:=
    d \Gamma_{\mathrm{b}} \rbk{\omega - \mub},
\end{align}
where $T \in \bbB \rbk{\ell^2\rbk{\Lambda}}$ is self-adjoint, $U > 0$ is a Coulomb repulsion,
and $n_{x, \sigma} := c_{x, \sigma}^* c_{x, \sigma}$.
The operators $\cbk{c_{x, \sigma}}$ satisfy the canonical anti-commutation relation (CAR).
The multiplication operator defined by the function $\omega \colon \bbR \to \bbR_{\geq 0}$ is nonnegative and  self-adjoint.

We impose several conditions for $\lambdaxk$ and $\omega$.
The functions $\lambdaxk$ have a role for infrared conditions,
and the conditions for the operator $\omega$ are for an infrared condition and thermodynamic limit conditions.
Here we fix an inverse temperature $\beta > 0$.
\begin{assump}
    For cutoff functions $\lambdaxk$ we impose the conditions
    \begin{align}
        \lambdaxk
        &\in
        \dom \omega^{-1/2} \, \rbk{\kappa \geq 0}, \quad \\
        \lambdaxk
        &\in
        \dom \omega^{-1} \, \rbk{\kappa > 0}, \\
        \lambdaxk
        &\stackrel{\mathrm{s}}{\to}
        \lambda_x^{0} =: \lambda_x.
    \end{align}
    Set the function $r: \bbR_{\geq 0} \to \bbR_{\geq 0}$ $r \rbk{ \abs{k}} := \omega (k)$.
    For the operator $\omega$ we impose the conditions
    \begin{align}
        r
        \in
        C_{+, \infty}^1
        :=
        C_{+, \infty}^1 \rbk{ [0, \infty) }
        :&=
        \set{f \in C^1 \rbk{[0, \infty)}}{ f' \rbk{k} > 0, \quad f(k) \to \infty \, \rbk{ k \to \infty } }, \\
        \sup_{k \geq 0} \rbk{1 + k}^{d_0} e^{- \beta \omega \rbk{k}}
        &<
        \infty, \\
        \int_{\abs{k} \leq 1} \frac{1}{\omega \rbk{k} - \omega \rbk{0}} dk
        &<
        \infty,
    \end{align}
    where we assume the existence of the real constant $d_0 > d$.
    Furthermore set $\omega_0 := \omega \rbk{0} $ and $\omega_0 - \mub > 0$.
\end{assump}
Without further comment we always assume the above in the following.
\section{Unitary transform of the Hamiltonians}
\label{sec-3}

Firstly we define the unitary operators.
We can define these on both $\calFL$ and $\calF$.
The operators
\begin{align}
    \Sk := \sum_{x \in \Lambda} n_x \otimes \phi \rbk{ i \omega^{-1} \lambdaxk}, \quad \kappa > 0
\end{align}
are self-adjoint thanks to Nelson's analytic vector theorem.
From this we define the demanded unitaries as follows:
\begin{align}
    \Vk
    :=
    e^{i \alpha \Sk}.
\end{align}

\begin{thm}
    We have the following operator equality
    \begin{align}
        \Vk 1 \otimes \Hbm \rbk{\Vk}^{-1}
        =
        1 \otimes \Hbm + \alpha \HIk + \frac{\alpha^2}{2} \Rk_{-1/2} \otimes 1,
    \end{align}
    where
    \begin{align}
        \Rk_{m}
        :=
        \sum_{x,y \in \Lambda} \bkt{ \omega^m \lambdaxk }{ \omega^m \lambdaxk } n_x n_y,
     \quad m \in \bbR.
    \end{align}
\end{thm}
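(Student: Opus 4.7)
The identity is a Lang--Firsov (polaron) type unitary transformation, proved by observing that the iterated commutators of $\Sk$ with $1\otimes\Hbm$ eventually produce a scalar on the bosonic factor, so the Baker--Campbell--Hausdorff expansion truncates at second order. I would first fix a common dense invariant core $\calD$ on which the computation below is a literal operator identity, for instance the algebraic tensor product of $\calHe$ with the subspace of finite-particle smooth vectors in $\calFbfin\cap\bigcap_k \dom d\Gamma_\rmb(\omega)^k$. The infrared hypothesis $\lambdaxk\in\dom\omega^{-1}$ at $\kappa>0$, combined with Nelson's analytic vector theorem (already invoked to give self-adjointness of $\Sk$), supplies such a core invariant under $e^{it\alpha\Sk}$ for every $t\in\bbR$, so the identity extends from $\calD$ to the closure of $1\otimes\Hbm$.

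Consider the Heisenberg flow $F(t):=e^{it\alpha\Sk}(1\otimes\Hbm)e^{-it\alpha\Sk}$ for $t\in[0,1]$, for which the claim amounts to $F(1) = F(0) + \alpha\HIk + \tfrac{\alpha^2}{2}\Rk_{-1/2}\otimes 1$. The plan is to show that $\mathrm{ad}_{\Sk}^{\,n}(1\otimes\Hbm)=0$ for $n\ge 3$, so the series $F(1)=\sum_{n\ge 0}\tfrac{(i\alpha)^n}{n!}\mathrm{ad}_{\Sk}^{\,n}(1\otimes\Hbm)$ collapses to three terms. The $n_x$ are bounded, mutually commuting, and commute with every bosonic operator, hence with $\Sk$; so
\begin{align*}
[\Sk,\,1\otimes\Hbm] \;=\; \sum_{x\in\Lambda} n_x \otimes [\phi(i\omega^{-1}\lambdaxk),\,d\Gamma_\rmb(\omega-\mub)].
\end{align*}
The standard CCR identity $[d\Gamma_\rmb(B),\phi(h)]=-i\phi(iBh)$ (valid for self-adjoint $B$ and $h\in\dom B$) reduces the inner bracket to an explicit linear combination of field operators at $\lambdaxk$, and after summing over $x$ and multiplying by $i\alpha$ one reads off the first-order BCH term $\alpha\HIk$.

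Iterating once more and using $[\phi(f),\phi(g)] = i\,\mathrm{Im}\bkt{f}{g}$, together with the fact that $n_x n_y$ commutes with $\Sk$,
\begin{align*}
[\Sk,[\Sk,1\otimes\Hbm]] \;\propto\; \sum_{x,y\in\Lambda} n_xn_y\,\bkt{\omega^{-1/2}\lambdaxk}{\omega^{-1/2}\lambda_y^\kappa}\otimes 1 \;=\; \Rk_{-1/2}\otimes 1
\end{align*}
is a real scalar on the bosonic factor (the factor $\omega^{-1/2}\otimes\omega^{-1/2}$ appearing because one $\omega^{-1}$ is consumed by the $\omega$ inside $\Hbm$). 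In particular it commutes with $\Sk$, so all higher iterated commutators vanish and the BCH series yields the claim, the coefficient $\tfrac{\alpha^2}{2}$ being produced by the combinatorial factor $\tfrac{(i\alpha)^2}{2!}$ together with the sign accumulated from the double bracket.

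The main obstacle is the domain/core bookkeeping for the unbounded operators involved: one must verify (i) that $\calD$ is invariant under $e^{it\alpha\Sk}$, (ii) that the above commutator identities hold as literal operator identities on $\calD$ (so that in particular $\calD\subset\dom(\Sk\Hbm)\cap\dom(\Hbm\Sk)$ and so on for the second iteration), and (iii) that the resulting identity extends from $\calD$ to the closure of $1\otimes\Hbm$. Each of these is handled by standard Fock-space techniques given the infrared regularity $\lambdaxk\in\dom\omega^{-1}$ at fixed $\kappa>0$; the precise coefficients then follow by direct tracking of signs in the two commutator computations.
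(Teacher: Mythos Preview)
Your proposal is correct and follows essentially the same route as the paper: fix a suitable dense core of (entire analytic) finite-particle vectors, justify the Baker--Campbell--Hausdorff expansion $\Vk(1\otimes\Hbm)(\Vk)^{-1}=\sum_{n\ge 0}\frac{(i\alpha)^n}{n!}\delta_{\Sk}^n(1\otimes\Hbm)$ on that core, and then compute the iterated commutators to see the series truncate at second order. The paper states this more tersely (invoking analytic vectors for the left-hand side and leaving the commutator algebra as ``more calculation''), while you spell out the two commutator identities and the domain bookkeeping, but the underlying argument is the same Lang--Firsov/polaron computation.
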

\begin{proof}
We prove this in \cite{SY}.
For the reader's sake we show the outline of the proof.
Define the subspace
\begin{align}
    \calD
    :=
    \calHe \hat{\otimes} \calFbfin \rbk{E(\Hbm)},
\end{align}
where $\hat{\otimes}$ is an algebraic tensor product,
\begin{align}
    \calFbfin \rbk{E(\Hbm)}
    :=
    \algoplus_{n=0}^{\infty} \algotimes_{\mathrm{s}}^{n} \Hbm,
\end{align}
$\algoplus$ is an algebraic direct sum,
and $E(\Hbm)$ is the set of entire analytic vectors for $\Hbm$.
Since the LHS is essentially self-adjoint and since $\calD$ is a dense set of analytic vectors for the LHS,
the following formal calculation is permitted:
\begin{align}
    \Vk 1 \otimes \Hbm \rbk{\Vk}^{-1}
    =
    \sum_{n=0}^{\infty} \frac{ \rbk{i \alpha}^n }{n!} \delta_{\Sk}^n \rbk{1 \otimes \Hbm}, \quad \delta_A(B) := \sqbk{A, B}.
\end{align}
More calculation shows our theorem.
\end{proof}

\begin{cor}
    $\Hmk$ and $\tHfrmk$ are unitarily equivalent for $\kappa > 0$, where
    \begin{align}
        \tHfrmk
        :=
        \tHek \otimes 1 + 1 \otimes \Hbm, \quad
        \tHek
        :=
        \He - \alpha^2 \Rk_{-1/2}.
    \end{align}
\end{cor}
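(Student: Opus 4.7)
The theorem already does the analytic heavy lifting; the corollary should be an algebraic consequence. I would solve the theorem for $\alpha \HIk$,
\begin{equation*}
    \alpha \HIk = \Vk (1 \otimes \Hbm)(\Vk)^{-1} - 1 \otimes \Hbm - \frac{\alpha^2}{2} \Rk_{-1/2} \otimes 1,
\end{equation*}
and substitute into $\Hmk = \He \otimes 1 + 1 \otimes \Hbm + \alpha \HIk$. The two copies of $1 \otimes \Hbm$ cancel, leaving
\begin{equation*}
    \Hmk = \Bigl(\He - \tfrac{\alpha^2}{2} \Rk_{-1/2}\Bigr) \otimes 1 + \Vk (1 \otimes \Hbm)(\Vk)^{-1}.
\end{equation*}

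The remaining task is to pull $\Vk$ out of both summands, which requires $\Vk$ to commute with the purely electronic operator on the left. Because $\Sk$ is built on the electron side from the pairwise commuting family $\{n_x\}_{x \in \Lambda}$, the unitary $\Vk = e^{i \alpha \Sk}$ commutes with any operator $A \otimes 1$ whose $A$ lies in the commutant of $\{n_x\}$. The density--density piece $\Rk_{-1/2} = \sum_{x,y} \bkt{\omega^{-1/2} \lambdaxk}{\omega^{-1/2} \lambda_y^{\kappa}} n_x n_y$ is manifestly in that commutant, being a polynomial in $\{n_x\}$. For $\He$, the Coulomb part $U \sum_x n_{x,+} n_{x,-}$ is likewise a polynomial in $\{n_{x,\sigma}\}$ and commutes with every $n_y$, so the check reduces to verifying $[d\Gamma_{\rme}(T), n_x] = 0$ for each $x \in \Lambda$; I would record this as a short CAR computation using the site-wise structure of $T$ assumed by the model.

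Once this commutativity is in hand, one factorises
\begin{equation*}
    \Hmk = \Vk \bigl[ \tHek \otimes 1 + 1 \otimes \Hbm \bigr] (\Vk)^{-1} = \Vk \, \tHfrmk \, (\Vk)^{-1},
\end{equation*}
so that $\Vk$ itself is the intertwining unitary. \textbf{Main obstacle.} The only step of substance beyond algebraic rearrangement of the theorem is verifying $[\He, n_x] = 0$; all other manipulations are justified, as in the theorem, on the dense analytic domain $\calD$ where essential self-adjointness applies. I would also want to reconcile the coefficient $\alpha^2/2$ that falls out of the algebra with the coefficient $\alpha^2$ appearing in the corollary's definition of $\tHek$; this looks like a matter of convention for $\phi$ or $\Sk$, but it should be pinned down explicitly rather than brushed aside.
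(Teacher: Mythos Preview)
Your overall strategy---rearrange the theorem algebraically and then pull $\Vk$ past the purely electronic part---is exactly the route the paper has in mind; the corollary is stated there without proof, as an ``immediate'' consequence of the theorem. So at the level of approach there is nothing to compare.

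However, the step you flag as ``the only step of substance'' is a genuine gap, and it is \emph{not} the short CAR computation you anticipate. For a Hubbard hopping operator $d\Gamma_{\rme}(T)=\sum_{x,y,\sigma} t_{xy}\,c^*_{x\sigma}c_{y\sigma}$ with $T$ merely self-adjoint on $\ell^2(\Lambda)$ (which is all the paper assumes), one has
\[
[\,c^*_{x\sigma}c_{y\sigma},\,n_z\,]=(\delta_{yz}-\delta_{xz})\,c^*_{x\sigma}c_{y\sigma},
\]
so $[d\Gamma_{\rme}(T),n_z]\neq 0$ whenever $T$ has any off-diagonal entry. There is no ``site-wise structure of $T$'' assumed in the paper that would force $T$ to be diagonal; with genuine hopping, $\Vk$ does \emph{not} commute with $\He\otimes 1$, and conjugating $\He\otimes 1$ by $\Vk$ produces extra electron--phonon cross terms rather than leaving it fixed. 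Your argument therefore does not close as written. Either an additional hypothesis on $T$ (diagonal in the site basis) is being tacitly used, or the corollary as stated is stronger than what the theorem actually yields; you should not paper over this by calling it a routine CAR check.

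Your second worry, the $\alpha^2/2$ versus $\alpha^2$ discrepancy, is also real and is not resolved by any convention visible in the paper: with the theorem exactly as written and assuming $\Vk$ commutes with $\He\otimes 1$, the algebra gives $\He-\tfrac{\alpha^2}{2}\Rk_{-1/2}$, not $\He-\alpha^2\Rk_{-1/2}$. You are right to insist this be pinned down rather than absorbed into a normalisation of $\phi$ that is never stated.
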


From here we work on $\calFL, \, L < \infty$ and assume $\kappa > 0$.
Hence we can use the trace functional for defining the equilibrium state.
We denote $\Tr$ is the trace on $\calFL$, $\Tre$ is on $\calHe$, and $\Trb$ is on $\calFbL$.
First we define von-Neumann algebras.
\begin{align}
    \calM
    &:=
    \calMe \otimes \calMb, \\
    \calMe
    &:=
    \bbB \rbk{ \calHe }, \\
    \calMb
    &:=
    \mathrm{vN} \set{W(f)}{ f \in \dom \omega^{-1/2}}, \quad
    W(f)
    :=
    e^{i \phi(f)},
\end{align}
where $\mathrm{vN} \, \calA$ is the von-Neumann algebra
generated by $\calA \subset \bbB(\calK)$ for a Hilbert space $\calK$.
Next we define various equilibrium states as follows:
\begin{align}
    \psibmk (A)
    &:=
    \frac{1}{\ZbLmk} \Tr \sqbk {A e^{- \beta \Hmk}}, \quad
    A \in \calM, \quad
    \ZbLmk
    :=
    \Tr \sqbk {e^{-\beta \Hmk}}, \\
    \tpsibmk
    &:=
    \tpsiebk \otimes \psibfrbm,
\end{align}
where
\begin{align}
    \tpsiebk (\Ae)
    &:=
    \frac{1}{\Zebk} \Tre \sqbk{\Ae e^{- \beta \tHek}}, \quad
    \Ae \in \calMe, \quad
    \Zebk
    :=
    \Tre \sqbk {e^{-\beta \tHek}}, \\
    \psibfrbm (\Ab)
    &:=
    \frac{1}{\ZbfrLm} \Trb \sqbk{\Ab e^{- \beta \Hbm}}, \quad
    \Ab \in \calMb, \quad
    \ZbfrLm
    :=
    \Trb \sqbk {e^{-\beta \Hbm}}.
\end{align}

We prove the following important result as in \cite{SY}.
We can derive BEC from this because we can use the arguments for free boson's BEC.
\begin{thm}\label{label1}
    For $\Ae \in \calMe$ and $f \in \dom \omega^{-1/2}$, we have
    \begin{align}
        \psibmk \rbk {\Ae \otimes W(f)}
        =
        \tpsiebk \rbk{ e^{i \alpha \tnekf} \Ae} \psibfrbm \rbk{W(f)},
    \end{align}
    where
    \begin{align}
        \tnekf
        :=
        \sum_{x \in \Lambda} \rmRe \bkt{\omega^{-1/2}f}{\omega^{-1/2}\lambdaxk} n_x.
    \end{align}
\end{thm}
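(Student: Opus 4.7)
The strategy is to use Corollary 3.2 --- the unitary equivalence $(\Vk)^{-1}\Hmk\Vk = \tHfrmk$, where $\tHfrmk = \tHek\otimes 1 + 1\otimes\Hbm$ decouples into commuting electron and boson pieces --- to replace the interacting Gibbs factor by a free product one. Since $e^{-\beta\tHfrmk} = e^{-\beta\tHek}\otimes e^{-\beta\Hbm}$, this is what lets the resulting trace split into independent electron and boson traces.

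Substituting $e^{-\beta\Hmk} = \Vk\, e^{-\beta\tHfrmk}(\Vk)^{-1}$ into $\Tr[(\Ae\otimes W(f))\,e^{-\beta\Hmk}]$ and applying cyclicity yields $\Tr[(\Vk)^{-1}(\Ae\otimes W(f))\Vk \cdot e^{-\beta\tHfrmk}]$. The heart of the proof is then the Weyl conjugation $\Vk(1 \otimes W(f))(\Vk)^{-1}$. Writing $\Sk = \sum_x n_x \otimes \phi(i\omega^{-1}\lambdaxk)$ and restricting to a joint eigenspace of the commuting family $\{n_x\}$ with eigenvalues $(\underline{n}_x)$, real-linearity of $\phi$ shows that $\Vk$ acts there as a single boson Weyl operator $W(h)$ with $h = i\alpha\omega^{-1}\sum_x \underline{n}_x \lambdaxk$. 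The Weyl CCR $W(h)W(f)W(h)^{-1} = e^{-i\rmIm\bkt{h}{f}}W(f)$, combined with the calculation $\rmIm\bkt{i\omega^{-1}\lambdaxk}{f} = -\rmRe\bkt{\omega^{-1/2}f}{\omega^{-1/2}\lambdaxk}$, produces exactly the scalar phase $e^{i\alpha\tnekf}$ on each eigenspace, and promotion back to the full Hilbert space gives the operator identity $\Vk(1 \otimes W(f))(\Vk)^{-1} = e^{i\alpha\tnekf}\otimes W(f)$.

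Using this identity --- together with the commutation of $\Ae\otimes 1$ with $\Vk$, which rests on the same structural hypothesis about $\He$ that underlies Corollary 3.2 --- the trace factorizes into $\Tre[e^{i\alpha\tnekf}\Ae\, e^{-\beta\tHek}]\,\Trb[W(f)\, e^{-\beta\Hbm}]$, where $[\tnekf,\tHek]=0$ (both $\He$ and $\Rk_{-1/2}$ commute with $\{n_x\}$) and cyclicity of the electron trace are used to bring the phase to the stated position. Specializing to $\Ae = 1$, $f = 0$ gives the parallel partition-function factorization $\ZbLmk = \Zebk \cdot \ZbfrLm$, and division produces the claimed identity. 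The main obstacle is the Weyl computation with the correct sign conventions on each $\{n_x\}$-eigenspace; a subsidiary subtlety is the careful bookkeeping of the commutation of $\Ae\otimes 1$ with $\Vk$, which is the same implicit hypothesis on $\He$ needed to make the unitary equivalence in Corollary 3.2 a genuine operator equality.
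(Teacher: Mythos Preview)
Your overall strategy coincides with the paper's: conjugate by $\Vk$ to replace $e^{-\beta\Hmk}$ by the decoupled $e^{-\beta\tHfrmk}$ and then factor the trace. The Weyl--CCR computation on each joint eigenspace of $\{n_x\}$, yielding $(\Vk)^{\pm 1}(1\otimes W(f))(\Vk)^{\mp 1}=e^{\pm i\alpha\tnekf}\otimes W(f)$, is correct and is indeed the heart of the argument.

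There is, however, one genuine error. You assert that $\Ae\otimes 1$ commutes with $\Vk$, but this is false for a general $\Ae\in\calMe=\bbB(\calHe)$. The implicit hypothesis behind Corollary~3.2 is that $\He$ (and hence $\tHek$) commutes with each $n_x$; nothing of the sort holds for an arbitrary $\Ae$. Writing $\Vk=\sum_{\underline{n}}P_{\underline{n}}\otimes W(h_{\underline{n}})$ with $P_{\underline{n}}$ the spectral projection for the joint eigenvalue $\underline{n}$ and $h_{\underline{n}}=i\alpha\sum_x\underline{n}_x\,\omega^{-1}\lambdaxk$, one has
\[
(\Vk)^{-1}(\Ae\otimes 1)\Vk=\sum_{\underline{n},\underline{m}}P_{\underline{n}}\Ae P_{\underline{m}}\otimes W(h_{\underline{n}})^{*}W(h_{\underline{m}}),
\]
and the off-diagonal blocks $\underline{n}\neq\underline{m}$ do not vanish.

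The fix is not at the operator level but under the trace. After conjugating and multiplying by $e^{-\beta\tHek}\otimes e^{-\beta\Hbm}$, use that $[\tHek,n_x]=0$ (the hypothesis you already invoke) to commute $e^{-\beta\tHek}$ past $P_{\underline{m}}$; cyclicity of $\Tre$ together with $P_{\underline{m}}P_{\underline{n}}=\delta_{\underline{m}\underline{n}}P_{\underline{n}}$ then kills all off-diagonal contributions. On the surviving diagonal one has $W(h_{\underline{n}})^{*}W(f)W(h_{\underline{n}})=e^{i\alpha\,\tnekf(\underline{n})}W(f)$, and the trace factors as stated. In short, the commutation you need is $[\tHek,n_x]=0$ \emph{inside the trace}, not $[\Ae,n_x]=0$ at the operator level.
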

\begin{proof}
Recall the operator $S^{\kappa} = \sum_{x \in \Lambda} n_x \otimes \phi \rbk{i \omega^{-1} \lambdaxk}$ and the unitary transform using it.
Then we obtain
\begin{align}
 &\Tr \sqbk{\Ae \otimes W(f) e^{- \beta \Hmk}} \\
 =
 &\Tr \sqbk{e^{-i \alpha S^{\kappa}} \Ae \otimes W(f) e^{i \alpha S^{\kappa}} \cdot e^{- i \alpha S^{\kappa}} e^{- \beta \Hmk} e^{i \alpha S^{\kappa}} } \\
 =
 &\Tre \sqbk{e^{i \alpha \tnekf} \Ae} \Trb \sqbk{W(f)}.
\end{align}
\end{proof}
\section{Existence of BEC}
\label{sec-4}

In this section we take the thermodynamic limit and assume infrared cutoff, i.e., $\kappa > 0$.
Here $\kappa$ is fixed.
See the chapter 9 in \cite{A11} for the detailed analysis.
We define a lattice $\GLd$ in a wave number space for finite regions in a real space:
\begin{align}
 \GL
 :=
 \frac{2\pi}{L} \bbZ.
\end{align}

Firstly we show the exact expression for the boson number expectation.

\begin{thm}
 Assume $\omega(0) > \mub$, the operator $e^{- \beta \omega}$ is in the trace class, and the spectrum of the operator $\omega$
 is $\sigma (\omega) = \set{\omega(k)}{k \in \GLd}$.
 Then we obtain the following exact expression,
 \begin{align}
  \NbLk
  :=
  \psibfrbm \rbk{\hat{\Nb}}
  =
  \Ni \sqbk{\sum_{k \in \GLd}\frac{1}{e^{\beta \rbk{\omega(k) - \mub}}} + \alpha^2 \tpsiebk \rbk{R_{-1}^{\kappa}}},
 \end{align}
 where $\hat{\Nb}$ is the boson number operator.
\end{thm}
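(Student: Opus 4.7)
The plan is to deduce the boson-number expectation from the Weyl-generating-function identity of Theorem~\ref{label1} by differentiating twice in the Weyl parameter, and then to sum over an orthonormal basis of $\calHbL$. First I would specialise that identity to $A_{\rme} = 1$ and $f = tg$ with $g \in \dom \omega^{-1/2}$ and $t \in \bbR$:
\begin{equation}
  \psibmk(1 \otimes W(tg)) = \tpsiebk\bigl(e^{it\alpha\tilde{n}_{\rme}^{\kappa}(g)}\bigr)\,\psibfrbm(W(tg)).
\end{equation}
Both factors are smooth in $t$: the electron factor has a bounded self-adjoint generator on the finite-dimensional $\calHe$, and the boson factor is the characteristic function of the quasi-free state $\psibfrbm$ (a Gaussian in $t$). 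Differentiating twice at $t=0$ and using $\psibfrbm(\phi(g)) = 0$ (which kills the mixed term) yields
\begin{equation}
  \psibmk(1 \otimes \phi(g)^{2}) = \alpha^{2}\,\tpsiebk\bigl(\tilde{n}_{\rme}^{\kappa}(g)^{2}\bigr) + \psibfrbm(\phi(g)^{2}).
\end{equation}

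Next I would expand the number operator. Fix an orthonormal basis $\{e_{k,j}\}_{k \in \GLd,\, j = 1, \ldots, \Ni}$ of $\calHbL$ diagonalising $\omega$, and use the identity $a^{*}(f)a(f) = \tfrac{1}{2}(\phi(f)^{2} + \phi(if)^{2}) - \tfrac{1}{2}\|f\|^{2}$ to write $\hat{\Nb} = \sum_{k,j} a^{*}(e_{k,j}) a(e_{k,j})$. Substituting the previous display with $g = e_{k,j}$ and $g = ie_{k,j}$ and summing, one obtains
\begin{equation}
  \psibmk(1 \otimes \hat{\Nb}) = \psibfrbm(\hat{\Nb}) + \frac{\alpha^{2}}{2}\,\tpsiebk\Biggl(\sum_{k,j}\bigl[\tilde{n}_{\rme}^{\kappa}(e_{k,j})^{2} + \tilde{n}_{\rme}^{\kappa}(ie_{k,j})^{2}\bigr]\Biggr).
\end{equation}
Under the spectral assumption $\sigma(\omega) = \{\omega(k) : k \in \GLd\}$ and the gap $\omega_{0} - \mub > 0$, the first summand is the standard free Bose-Einstein occupation $\psibfrbm(\hat{\Nb}) = \Ni \sum_{k \in \GLd} 1/(e^{\beta(\omega(k) - \mub)} - 1)$. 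For the electron piece the real-bilinear identity $\mathrm{Re}\bkt{f}{u}\,\mathrm{Re}\bkt{f}{v} + \mathrm{Im}\bkt{f}{u}\,\mathrm{Im}\bkt{f}{v} = \mathrm{Re}(\overline{\bkt{f}{u}}\,\bkt{f}{v})$ combined with Parseval's identity collapses the basis sum:
\begin{equation}
  \sum_{k,j}\bigl[\tilde{n}_{\rme}^{\kappa}(e_{k,j})^{2} + \tilde{n}_{\rme}^{\kappa}(ie_{k,j})^{2}\bigr] = \sum_{x,y \in \Lambda}\mathrm{Re}\bkt{\omega^{-1}\lambda_{x}^{\kappa}}{\omega^{-1}\lambda_{y}^{\kappa}}\, n_{x} n_{y} = \Rk_{-1},
\end{equation}
the last equality using the symmetrisation $n_{x} n_{y} = n_{y} n_{x}$ inherent in $\Rk_{-1}$, and assembling these two identifications gives the claimed formula.

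The main technical obstacle is to justify the twice-differentiation of the Weyl generating function under the Gibbs trace and the interchange of the infinite basis sum with $\psibmk$, since $\phi(g)^{2}$ and $\hat{\Nb}$ are unbounded. The trace-class hypothesis on $e^{-\beta\omega}$ together with the gap $\omega_{0} - \mub > 0$ ensures that $\hat{\Nb}\, e^{-\beta\Hbm}$ is trace class on $\calFbL$ and that the partial sums over $(k,j)$ converge absolutely, so dominated convergence applied to the spectral decomposition of $\Hbm$ controls all the interchanges. On the electron side $\calHe$ is finite-dimensional, so $\tpsiebk(\Rk_{-1})$ and the other electron-state evaluations are automatic.
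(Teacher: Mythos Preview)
Your argument is correct and reaches the right formula (up to the evident typos in the paper's displayed statement: the missing ``$-1$'' in the Bose denominator and the overall constants). The underlying mechanism is the same as the paper's---the unitary $V^{\kappa}=e^{i\alpha S^{\kappa}}$---but the execution differs. The paper's one-line ``use the same arguments as in Theorem~\ref{label1}'' means: insert $V^{\kappa}(V^{\kappa})^{-1}$ into the trace, conjugate $1\otimes\hat N_{\mathrm b}$ directly, and read off the three-term commutator expansion
\[
(V^{\kappa})^{-1}\bigl(1\otimes\hat N_{\mathrm b}\bigr)V^{\kappa}
=1\otimes\hat N_{\mathrm b}-\alpha\sum_{x}n_{x}\otimes\phi(\omega^{-1}\lambda_{x}^{\kappa})+\tfrac{\alpha^{2}}{2}\,R_{-1}^{\kappa}\otimes 1,
\]
whose middle term vanishes under $\psi_{\mathrm b,\mathrm{fr},\beta,\mu_{\mathrm b}}$. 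You instead treat Theorem~\ref{label1} as a black-box generating function, differentiate twice in the Weyl parameter, and recover the same three contributions via the identity $a^{*}(f)a(f)=\tfrac12(\phi(f)^{2}+\phi(if)^{2}-\|f\|^{2})$ together with Parseval. Your route avoids redoing the operator conjugation for the unbounded $\hat N_{\mathrm b}$ and is more explicit about the analytic justifications (trace-class control of the basis sum), at the modest cost of an extra layer of calculus; the paper's route is shorter once one is willing to rerun the BCH computation on a new observable.
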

\begin{proof}
Use the same arguments as in Theorem \ref{label1}.
\end{proof}

\begin{rem}
 We must impose infrared cutoff for massless bosons in this lattice approximation
 otherwise the second term diverges even if we consider the number density.
\end{rem}

We define some symbols before the detailed analysis.
\begin{align}
 \omega_0
 &:=
 \omega \rbk{0}
 >
 \mub, \\
 y
 &:=
 e^{\beta \omega_0 - \mub}
 >
 1, \\
 F \rbk{k}
 &:=
 \omega \rbk{k} - \omega_0
 \geq
 0, \\
 \NbLk (\beta, y)
 &:=
 \NbLk
 =
 \Ni \sqbk{\sum_{k \in \GLd}\frac{1}{e^{\beta \rbk{\omega(k) - \mub}}} + \alpha^2 \tpsiebk \rbk{R_{-1}^{\kappa}}}, \\
 \fLk (y)
 &:=
 \rhobLk (\beta, y)
 :=
 \frac{1}{L^d} \NbLk, \quad y > 1.
\end{align}
Furtheremore we set the following decompositions:
\begin{align}
 \NbLk
 &:=
 \Nbzerok (y) + \NbLone (\beta, \mub), \\
 \Nbzerok (y)
 &:=
 \frac{\Ni}{y - 1} + \Nbirabk, \\
 \Nbirabk
 &:=
 \Ni \alpha^2 \tpsiebk \rbk{R_{-1}^{\kappa}}, \\
 \NbLone \rbk{\beta, y}
 &:=
 \Ni \sum_{k \in \GLd \setminus \cbk{0}} \frac{1}{y e^{\beta F(k)} - 1}.
\end{align}
Using this decomposition, we set
\begin{align}
 \rhobLk \rbk{\beta, y}
 &=
 \rhobLzerok (y) + \rhobLone(\beta, y), \\
 \rhobLzerok
 :&=
 \frac{1}{L^d} \Nbzerok (y), \\
 \rhobLone (\beta, y)
 :&=
 \frac{1}{L^d} \NbLone \rbk{\beta, y}, \\
 \rhobLiralphakappa (\beta)
 :&=
 \frac{1}{L^d} \Ni \alpha^2 \tpsiebk(R_{-1}^{\kappa}).
\end{align}
Then we have the following
\begin{lem}\label{label2} (\cite{A11} Lemma. 9.22)
 Set a real number $\brhobkappa$ as $\brhobkappa > \rhobLiralphakappa (\beta)$. Then, for each fixed $L$, there exists a unique number
 $y_L^{\kappa} \geq 1$ such that $\brhobkappa = \rhobLk \rbk{\beta, y_L^{\kappa}}$.
 Furtheremore there exists a monotonically increasing sequence $\cbk{L_n}$ such that
 \begin{align}
  y_{L_n}^{\kappa} \to y_{\infty}^{\kappa} \geq 1.
 \end{align}
\end{lem}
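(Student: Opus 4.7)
The plan splits naturally into (a) existence/uniqueness of $y_L^{\kappa}$ for fixed $L$, and (b) extraction of a convergent subsequence. For (a), I view $\rhobLk(\beta, \cdot)$ as a function on $(1, \infty)$ and observe that the decomposition stated above,
\begin{align}
    \rhobLk(\beta, y) = \rhobLiralphakappa(\beta) + \frac{\Ni}{L^d (y - 1)} + \rhobLone(\beta, y),
\end{align}
exhibits it as (constant) + (strictly decreasing, $+\infty$ at $y \downarrow 1$ and $0$ at $y \to \infty$) + (the finite sum $\frac{\Ni}{L^d}\sum_{k \in \GLd \setminus \cbk{0}} (y e^{\beta F(k)} - 1)^{-1}$, also continuous and strictly decreasing in $y > 1$ with limit $0$ at $+\infty$; finiteness comes from $e^{-\beta \omega}$ being trace class). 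Hence $\rhobLk(\beta, \cdot)$ is a continuous strictly decreasing bijection from $(1, \infty)$ onto $(\rhobLiralphakappa(\beta), \infty)$, and the intermediate value theorem produces a unique $y_L^{\kappa} > 1$ with $\rhobLk(\beta, y_L^{\kappa}) = \brhobkappa$.

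For (b), it is enough to produce an $L$-independent upper bound $y_L^{\kappa} \leq C$; Bolzano-Weierstrass then yields a monotone increasing $\cbk{L_n}$ along which $y_{L_n}^{\kappa}$ converges in $[1, C]$. Suppose for contradiction that along some $L_n \to \infty$ we have $y_{L_n}^{\kappa} \to \infty$. Fix any $y_0 > 1$. Monotonicity of $\rho_{\mathrm{b}, L_n}^{\kappa}(\beta, \cdot)$ gives, for large $n$,
\begin{align}
    \brhobkappa = \rho_{\mathrm{b}, L_n}^{\kappa}(\beta, y_{L_n}^{\kappa}) \leq \rho_{\mathrm{b}, L_n}^{\kappa}(\beta, y_0) = \rho_{\mathrm{b}, L_n, \mathrm{ir}, \alpha}^{\kappa}(\beta) + \frac{\Ni}{L_n^d (y_0 - 1)} + \rho_{\mathrm{b}, L_n, 1}(\beta, y_0).
\end{align}
The first two terms are $O(L_n^{-d}) \to 0$; the third is a Riemann sum over $\Gamma_{L_n}^d$ converging to $\Ni \int_{\bbR^d} \frac{dk}{(2 \pi)^d (y_0 e^{\beta F(k)} - 1)}$, which is finite by the polynomial decay assumption on $e^{-\beta \omega}$. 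Letting $L_n \to \infty$ and then $y_0 \to \infty$ (dominated convergence: for $y_0 \geq 2$ the integrand is dominated by $2 e^{-\beta F(k)} / y_0$, which is integrable), the right hand side tends to $0$, contradicting $\brhobkappa > 0$.

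The technical heart, and the main obstacle, is the Riemann sum convergence for the free-Bose integrand at fixed activity $y_0 > 1$ in the thermodynamic limit. The hypotheses on $\omega$ (smoothness of $r$, the polynomial decay $\sup_k (1+k)^{d_0} e^{-\beta \omega(k)} < \infty$ with $d_0 > d$, and integrability of $(\omega - \omega_0)^{-1}$ near the origin) are exactly what one needs both for the limiting integral to be finite and for the lattice sum over $\GLd$ to approximate it uniformly in the relevant regime; this is the same mechanism used for the non-interacting Bose gas in Chapter 9 of \cite{A11}.
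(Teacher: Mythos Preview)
Your part (a) is exactly the paper's argument: continuity, strict monotonicity, and identification of the range $(\rhobLiralphakappa(\beta),\infty)$.

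Your part (b) is correct but takes a different route. The paper does not argue by contradiction; instead it obtains a direct upper bound on $y_L^{\kappa}$ from the elementary inequality $1/(y e^{\beta F(k)}-1)\le e^{-\beta F(k)}/(y-1)$, which, applied at $y=y_L^{\kappa}$, yields
\[
y_L^{\kappa}-1 \;\le\; \frac{\Ni}{\brhobkappa-\rhobLiralphakappa(\beta)}\,\frac{1}{L^d}\Bigl(1+\sum_{k\in\GLd}e^{-\beta F(k)}\Bigr),
\]
and boundedness then follows simply from boundedness of the Riemann sums $L^{-d}\sum_{k}e^{-\beta F(k)}$. This is more elementary than your argument: it uses only boundedness of the free sum, not the full Riemann-sum convergence at an auxiliary parameter $y_0$, nor the limit $\rhobfr(\beta,y_0)\to 0$ as $y_0\to\infty$. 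Your approach, by contrast, is conceptually natural (it exploits the monotonicity of $\rhobLk$ in $y$ in a qualitative way and the thermodynamic limit of the free density) and anticipates machinery that the paper will invoke later anyway; it just pays a slightly higher price up front. Either way the conclusion and the extraction of a monotone subsequence $\cbk{L_n}$ via Bolzano--Weierstrass are the same.
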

\begin{proof}
Clearly we have
\begin{align}
 \lim_{y \downarrow 1} \fLk(y)
 =
 \infty, \quad
 \lim_{y \to \infty} \fLk (y)
 =
 \rhobLiralphakappa (\beta),
\end{align}
and $\fLk$ is monotone decreasing.
Since the function $\fLk$ is continuous and bijection from $\rbk{1, \infty}$ to $\rbk{\rhobLiralphakappa (\beta), \infty}$,
there exists a unique number $y_L^{\kappa} > 1$ such that $\fLk (\yLk) = \brhobkappa$.

Next we prove the existence of the sequence $\cbk{L_n}$ \footnote{In \cite{A11}, this part of the proof is wrong.}.
Note that, for any fixed $L$,
\begin{align}
 \brhobkappa
 &=
 \frac{\Ni}{L^d} \rbk{ \frac{1}{\yLk - 1} + \sum_{k \in \GLd \setminus \cbk{0}} e^{-\beta F(k)}} + \rhobLiralphakappa (\beta) \\
 &\leq
 \frac{\Ni}{L^d \rbk{\yLk -1}} \rbk{1 + \sum_{k \in \GLd \setminus \cbk{0}} e^{-\beta F(k)}} + \rhobLiralphakappa (\beta).
\end{align}
Then we have
\begin{align}
 \yLk - 1
 \leq
 \frac{\Ni}{\brhobkappa - \rhobLiralphakappa (\beta)} \frac{1}{L^d} \rbk{1 + \sum_{k \in \GLd} e^{- \beta F(k)}}.
\end{align}
Hence the sequence $\cbk{\yLk}$ is bounded and have a convengent subsequence for the limit $L \to \infty$.
\end{proof}

\begin{lem} (\cite{A11} Lemma. 9.24)
 Take numbers $L > L_0 > 0$ and $a > 0$.
 Then, for any number $y_1, y_2 \geq 1 + a$, we have
 \begin{align}
  \abs{\fLk (y_1) - \fLk(y_2)}
  \leq
  C \abs{y_1 - y_2},
 \end{align}
 where the constant $C > 0$ is dependent on only $L_0$ and $a$.
\end{lem}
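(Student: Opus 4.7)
The plan is a mean-value argument applied to $\fLk$. Since $\rhobLiralphakappa(\beta)$ is independent of $y$, it cancels in $\fLk(y_1) - \fLk(y_2)$, so it suffices to control the $y$-derivative of
\begin{align*}
g_L(y) \ :=\ \frac{\Ni}{L^d}\sum_{k\in \GLd}\frac{1}{ye^{\beta F(k)}-1}
\end{align*}
uniformly for $y \ge 1+a$ and $L > L_0$. Termwise differentiation yields
\begin{align*}
g_L'(y) \ =\ -\frac{\Ni}{L^d}\sum_{k\in\GLd}\frac{e^{\beta F(k)}}{(ye^{\beta F(k)}-1)^2},
\end{align*}
so I aim to prove $\sup_{y\ge 1+a}|g_L'(y)| \le C(L_0,a)$; the lemma then follows from the fundamental theorem of calculus.

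For the pointwise estimate on each summand, note that $y\ge 1+a$ gives $ye^{\beta F(k)}-1 \ge (1+a)e^{\beta F(k)}-1$. For the zero mode $k=0$ one has $F(0)=0$, so this lower bound is $a$ and the summand is at most $1/a^2$. For $k\ne 0$, one has $F(k) > 0$, hence $e^{\beta F(k)}\ge 1$ and $(1+a)e^{\beta F(k)} - 1 \ge ae^{\beta F(k)}$; thus the summand is bounded by $e^{\beta F(k)} / (a^2 e^{2\beta F(k)}) = e^{-\beta F(k)}/a^2$. Combining the two cases (and using $e^{-\beta F(0)}=1$) gives
\begin{align*}
|g_L'(y)| \ \le\ \frac{\Ni}{a^2\, L^d}\sum_{k\in\GLd} e^{-\beta F(k)}.
\end{align*}

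The remaining task is to bound $\frac{1}{L^d}\sum_{k\in\GLd}e^{-\beta F(k)}$ uniformly in $L > L_0$. The zero-mode contributes $1/L^d \le 1/L_0^d$. For the nonzero modes I would use the standing assumption $\sup_k(1+|k|)^{d_0}e^{-\beta\omega(k)}<\infty$ with $d_0 > d$, which gives $e^{-\beta F(k)} \le C'(1+|k|)^{-d_0}$; since $r\in C^1_{+,\infty}$ the function $k\mapsto e^{-\beta F(k)}$ is radially decreasing, so each nonzero lattice value is majorized by the average of $e^{-\beta F(\cdot)}$ over an adjacent cell of side $2\pi/L$. A cell-comparison argument then yields
\begin{align*}
\frac{1}{L^d}\sum_{k\in\GLd\setminus\{0\}}e^{-\beta F(k)} \ \le\ \frac{1}{(2\pi)^d}\int_{\Rd} e^{-\beta F(k)}\, dk,
\end{align*}
which is finite and $L$-independent by $d_0>d$. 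Altogether $C$ depends only on $L_0,a,\beta,\omega,\Ni$, as required.

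The main obstacle is this last uniform-in-$L$ bound: the pointwise estimate $e^{-\beta F(k)}\le C'(1+|k|)^{-d_0}$ alone is not enough, since for large $L$ the density of lattice points inside a fixed ball grows like $L^d$ and could a priori beat the $1/L^d$ prefactor. The radial monotonicity of $\omega$ (built into the $C^1_{+,\infty}$ hypothesis) is essential to majorize the sum by a Riemann integral rather than the pointwise decay bound; without this geometric input, one would have to split the lattice into $|k|\lesssim 1$ and $|k|\gtrsim 1$ regimes and track the constants carefully, which is doable but messier.
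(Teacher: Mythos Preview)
Your proof is correct and follows essentially the same route as the paper: the paper computes the difference directly via the identity $\frac{1}{A}-\frac{1}{B}=\frac{B-A}{AB}$ rather than invoking the mean value theorem, but both arrive at the same summand bound $\frac{e^{\beta F(k)}}{(y e^{\beta F(k)}-1)^2}\le a^{-2}e^{-\beta F(k)}$ and then appeal to Riemann-sum convergence of $L^{-d}\sum_{k\in\GLd}e^{-\beta F(k)}$ to the finite integral $(2\pi)^{-d}\int_{\Rd}e^{-\beta F(k)}\,dk$. Your discussion of the uniform-in-$L$ bound is in fact more explicit than the paper's, which simply asserts the limit exists and is finite.
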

\begin{proof}
By definition we have
\begin{align}
 \abs{\fLk (y_1) - \fLk(y_2)}
 =
 \abs{y_2 - y_1} \frac{\Ni}{L^d} \sum_{k \in \GLd} \frac{e^{\beta F(k)}}{\rbk{y_1 e^{\beta F(k)} - 1} \rbk{y_2 e^{\beta F(k)} - 1}}.
\end{align}
Furtheremore the following estimate holds:
\begin{align}
 \frac{e^{\beta F(k)}}{\rbk{y_1 e^{\beta F(k) - 1}}\rbk{y_2 e^{\beta F(k)} -1}}
 &=
 e^{- \beta F(k)}\frac{e^{\beta F(k)}}{y_1 e^{\beta F(k)} - 1} \frac{e^{\beta F(k)}}{y_2 e^{\beta F(k)} -1} \\
 &\leq
 e^{- \beta F(k)}\frac{1}{y_1 - 1} \frac{1}{y_2 -1} \\
 &\leq
 \frac{1}{a^2} e^{-\beta F(k)}.
\end{align}
From this estimate we obtain
\begin{align}
 \lim_{L \to \infty} \frac{1}{L^d} \sum_{k \in \GLd} \frac{e^{\beta F(k)}}{\rbk{y_1 e^{\beta F(k) - 1}}\rbk{y_2 e^{\beta F(k)} -1}}
 =
 \frac{1}{(2 \pi)^d} \int_{\bbR^d} \frac{e^{\beta F(k)}}{\rbk{y_1 e^{\beta F(k) - 1}}\rbk{y_2 e^{\beta F(k)} -1}} dk < \infty.
\end{align}
Thus the desired result is proved.
\end{proof}

For later use we define the following 3 symbols:
\begin{align}
 \rhobLone (\beta, y)
 :=
 \rbL (\beta, y) + \RbL (\beta, y),
\end{align}
where we set
\begin{align}
 \rbL (\beta, y)
 &:=
 \frac{\Ni}{L^d} \sum_{k_j \neq 0, j=1,\dots,d} \frac{1}{y e^{\beta F(k)} - 1}, \\
 \RbL (\beta, y)
 &:=
 \frac{\Ni}{L^d} \sum_{j=1}^{d} \sum_{k \in \GLd \setminus \cbk{0}, k_j = 0} \frac{1}{y e^{\beta F(k)} - 1}.
\end{align}
Here we set $\RbL(\beta, y) = 0$ if $d=1$.
See \cite{A11} for details for the following 3 facts.

\begin{fact}(\cite{A11} Lemma. 9.25)
 For any $y > 1$ we have
 \begin{align}
  0
  <
  \rhobLone (\beta, y)
  \leq
  \frac{\Ni}{(2 \pi)^d} \int_{\Rd} \frac{1}{y e^{\beta F(k)} - 1} + \RbL (\beta, y).
 \end{align}
 Furtheremore let $\ep$ be a positive number.
 There exists a positive number $L_0(\beta, y, \ep)$ such that if $ L \geq L_0(\beta, y, \ep)$ then
 \begin{align}
  0
  <
  \RbL(\beta, y)
  \leq
  \frac{d \Ni}{(2 \pi)^{d-1} L}
  \rbk{\sum_{j=1}^d \int_{\mathbb{R}^{d-1}} \frac{1}{y e^{\beta (\omega (\abs{k}) - \omega_0)} - 1} dk + \ep} \quad (d \geq 2).
 \end{align}
\end{fact}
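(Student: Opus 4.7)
The plan is to treat both $\rbL$ and $\RbL$ as Riemann sums for the integrand $g(k) := 1/(y e^{\beta F(k)} - 1)$, exploiting that $g$ is radial and strictly decreasing in $|k|$ (because $F(k) = r(|k|) - r(0)$ with $r' > 0$) and integrable on $\Rd$: the assumption $\sup_k (1+k)^{d_0} e^{-\beta \omega(k)} < \infty$ with $d_0 > d$ gives $g(k) \leq y^{-1} e^{\beta \omega_0} (1+|k|)^{-d_0}$ for large $|k|$, while $g \leq 1/(y-1)$ globally rules out any singularity.

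For the first inequality, note that $\rhobLone = \rbL + \RbL$, so only $\rbL$ needs to be dominated by the integral. To each $n \in \bbZ^d$ with all $n_j \neq 0$, associate the cube $\tilde{C}_n$ whose $j$-th coordinate runs between $(n_j - \mathrm{sgn}(n_j)) \cdot 2\pi/L$ and $n_j \cdot 2\pi/L$. Each such cube lies in a single open orthant, has volume $(2\pi/L)^d$, and every $k' \in \tilde{C}_n$ satisfies $|k'_j| \leq |k_{n,j}|$ coordinate-wise, hence $|k'| \leq |k_n|$; the radial monotonicity of $g$ then yields $g(k_n)\,(2\pi/L)^d \leq \int_{\tilde{C}_n} g(k')\,dk'$. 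Since these cubes tile $\Rd$ up to a set of measure zero as $n$ ranges over $\cbk{n \in \bbZ^d : n_j \neq 0 \text{ for all } j}$, summation gives $\rbL \leq \Ni (2\pi)^{-d} \int_{\Rd} g$, which combined with $\rhobLone = \rbL + \RbL$ is the asserted bound.

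For the second inequality, rewrite
\begin{align*}
 \RbL = \frac{\Ni}{L} \cdot \frac{1}{L^{d-1}} \sum_{j=1}^d \sum_{k \in \GLd \setminus \cbk{0},\, k_j = 0} g(k).
\end{align*}
For each fixed $j$, the inner sum is a $(d-1)$-dimensional Riemann sum over $\tfrac{2\pi}{L}\bbZ^{d-1}$ for the continuous function $k' \mapsto 1/(y e^{\beta(\omega(|k'|) - \omega_0)} - 1)$ (using radiality of $\omega$), \emph{minus} the single origin term $g(0) = 1/(y-1)$. Continuity plus the integrable majorant above gives convergence of this Riemann sum to $(2\pi)^{-(d-1)} \int_{\bbR^{d-1}} (y e^{\beta(\omega(|k'|) - \omega_0)} - 1)^{-1} dk'$ as $L \to \infty$, while the excluded origin term contributes $O(L^{-(d-1)})$. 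Given $\ep > 0$, choose $L_0(\beta, y, \ep)$ large enough that for all $L \geq L_0$ and every $j = 1,\dots,d$ the combined approximation error is at most $\ep/d$; summing over $j$ and reinstating the $\Ni/L$ prefactor yields the stated bound (the additional factor of $d$ on the right merely enlarges the inequality).

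The main obstacle is the second part: the indexing set $\cbk{k_j = 0,\, k \neq 0}$ is \emph{not} off-axis in $(d-1)$ dimensions because the remaining coordinates are allowed to vanish, so the clean orthant-tiling comparison from the first part cannot be applied directly. One has to fall back on a standard lattice Riemann-sum-to-integral convergence (a dominated convergence argument against the integrable radial majorant) together with a separate $O(L^{-(d-1)})$ estimate for the omitted origin point. This is precisely where the threshold $L_0(\beta, y, \ep)$ enters, and the $1/L$ prefactor on the final bound is what later forces $\RbL \to 0$ in the thermodynamic limit, a fact crucial for the BEC analysis that follows.
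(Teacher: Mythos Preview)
The paper does not itself prove this statement: it is recorded as a \emph{Fact} with an explicit pointer to \cite{A11} for the argument, so there is no in-paper proof to compare against. Judged on its own, your argument is sound. The orthant-tiling comparison in the first part is the natural way to exploit the radial monotonicity of $g(k)=1/(ye^{\beta F(k)}-1)$ and indeed yields $\rbL \leq \Ni(2\pi)^{-d}\int_{\bbR^d} g$, which together with the defining decomposition $\rhobLone = \rbL + \RbL$ gives the first inequality. For the second part your identification of the inner sum as a $(d-1)$-dimensional Riemann sum (with the origin removed) is correct; continuity of $g$ plus the decay $g(k)\lesssim (1+|k|)^{-d_0}$ with $d_0>d>d-1$ justifies the passage to $\int_{\bbR^{d-1}} g$, and the omitted origin term is $O(L^{-(d-1)})$. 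Your observation that the extra factor $d$ in the target bound is slack is also right: you actually obtain the sharper estimate $\RbL \leq \tfrac{\Ni}{(2\pi)^{d-1}L}\bigl(\sum_{j}\int_{\bbR^{d-1}} g + \ep'\bigr)$, which implies the stated one.
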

\begin{fact} ( \cite{A11} Lemma. 9.26)
 The integral
 \begin{align}
  \rho_{\mathrm{b}, \mathrm{c}, \mathrm{fr}} (\beta)
  :=
  \frac{\Ni}{(2\pi)^d} \int_{\Rd} \frac{1}{e^{\beta F(k)} - 1} dk
 \end{align}
 is a finite positive number.
\end{fact}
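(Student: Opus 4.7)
The plan is to handle the two potential divergences---the infrared singularity at $k=0$ and the ultraviolet tail at $|k| \to \infty$---separately, invoking exactly one of the standing assumptions on $\omega$ for each. Positivity of $\rho_{\mathrm{b,c,fr}}(\beta)$ is immediate: strict monotonicity of $r$ in $C^1_{+,\infty}$ gives $F(k) > 0$ for $k \neq 0$, so the integrand is strictly positive almost everywhere.

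For finiteness I would fix a radius $R > 1$ with $\beta F(k) \geq \log 2$ for all $|k| \geq R$ (possible since $r(|k|) \to \infty$) and split
\begin{align*}
    \int_{\bbR^d} \frac{dk}{e^{\beta F(k)} - 1}
    = \int_{|k| < R} \frac{dk}{e^{\beta F(k)} - 1} + \int_{|k| \geq R} \frac{dk}{e^{\beta F(k)} - 1}.
\end{align*}
On the infrared part, the convexity bound $e^x - 1 \geq x$ yields the pointwise estimate $(e^{\beta F(k)} - 1)^{-1} \leq (\beta F(k))^{-1}$; integrability on $\{|k| \leq 1\}$ is exactly the standing assumption $\int_{|k|\leq 1}(\omega(k) - \omega(0))^{-1}\,dk < \infty$, while on the annulus $1 \leq |k| \leq R$ strict monotonicity of $r$ forces $F$ away from zero, leaving a bounded integrand on a set of finite Lebesgue measure.

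On the ultraviolet part, the choice of $R$ gives $e^{\beta F(k)} - 1 \geq \tfrac{1}{2} e^{\beta F(k)}$, so the integrand is dominated by $2 e^{\beta \omega_0} e^{-\beta \omega(k)}$; the decay assumption $\sup_{k\geq 0}(1+k)^{d_0} e^{-\beta\omega(k)} < \infty$ then bounds this by a constant times $(1+|k|)^{-d_0}$, which is Lebesgue integrable on $\bbR^d$ precisely because $d_0 > d$. There is no real obstacle here: the two standing assumptions are tailored to handle exactly these two tails of the Bose--Einstein occupation factor, and no delicate cancellation is required.
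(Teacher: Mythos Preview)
Your argument is correct. The paper does not actually supply its own proof of this fact; it simply records it as a citation to \cite{A11}, Lemma~9.26, with the blanket remark ``See \cite{A11} for details for the following 3 facts.'' Your splitting into an infrared piece controlled by the assumption $\int_{|k|\leq 1}(\omega(k)-\omega(0))^{-1}\,dk<\infty$ and an ultraviolet piece controlled by the decay hypothesis $\sup_{k\geq 0}(1+k)^{d_0}e^{-\beta\omega(k)}<\infty$ is exactly the natural route, and is presumably what the cited textbook does: the standing assumptions on $\omega$ in Assumption~2.1 are visibly tailored to make precisely these two bounds go through.
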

We define a number:
\begin{align}
 \rho_{\mathrm{b}, \mathrm{fr}} (\beta, y)
 :=
 \frac{\Ni}{(2 \pi)^d} \int_{\Rd} \frac{1}{y e^{\beta F(k)} - 1} dk, \quad \beta > 0, \quad y \geq 1.
\end{align}
Then we get the
\begin{fact}
 (\cite{A11} Lemma. 9.27)
 For any fixed $\beta > 0$ the function $\rhobfr (\beta, y)$ is strictly monotone decreasing, continuous, and
 \begin{align}
  \lim_{y \downarrow 1} \rhobfr (\beta, y)
  =
  \rhobcfr(\beta), \quad
  \lim_{y \to \infty} \rhobfr(\beta, y)
  =
  0 .
 \end{align}
 In particular we have
 \begin{align}
  \rhobfr (\beta, y)
  <
  \rhobcfr (\beta), \quad y > 1.
 \end{align}
\end{fact}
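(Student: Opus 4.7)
The plan is to reduce every assertion---strict monotonicity, continuity, the two one-sided limits, and the strict inequality---to pointwise properties of the integrand $g_{\beta}(y, k) := (y e^{\beta F(k)} - 1)^{-1}$ combined with the monotone and dominated convergence theorems. The crucial input is the preceding Fact that $\rhobcfr(\beta) < \infty$, which supplies an integrable majorant at $y = 1$; this in turn rests on the infrared integrability assumption $\int_{\abs{k} \leq 1} (\omega(k) - \omega_0)^{-1} dk < \infty$ and on the decay bound $\sup_k (1 + k)^{d_0} e^{-\beta \omega(k)} < \infty$. Since $r \in C_{+, \infty}^1$ is strictly increasing, $F(k) > 0$ for every $k \neq 0$, and on this full-measure set the map $y \mapsto g_{\beta}(y, k)$ is smooth and strictly decreasing on $[1, \infty)$, running from $g_{\beta}(1, k)$ down to $0$.

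Strict monotonicity is then immediate: for $1 \leq y_1 < y_2$, integrating the pointwise strict inequality $g_{\beta}(y_1, k) > g_{\beta}(y_2, k)$ over $\Rd$ preserves strictness. For continuity on $(1, \infty)$, I fix any $y_0 > 1$ and dominate the integrand on $[y_0, \infty)$ by $g_{\beta}(y_0, k)$, which is bounded near $k = 0$ (since $y_0 > 1$) and decays exponentially at infinity (by the assumption on $\omega$); dominated convergence then yields continuity at every $y > 1$.

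For the one-sided limit $y \downarrow 1$, monotone convergence applies since $g_{\beta}(y, k) \uparrow g_{\beta}(1, k)$ for each $k \neq 0$, giving $\rhobfr(\beta, y) \uparrow \rhobcfr(\beta)$; this also provides continuity from the right at $y = 1$. For $y \to \infty$, the integrand tends to $0$ pointwise while being dominated by $g_{\beta}(y_0, k)$ for any fixed $y_0 > 1$, so dominated convergence yields the limit $0$. Finally, the strict inequality $\rhobfr(\beta, y) < \rhobcfr(\beta)$ for each $y > 1$ follows by picking an auxiliary $y'$ with $1 < y' < y$ and combining strict monotonicity with the first limit: $\rhobfr(\beta, y) < \rhobfr(\beta, y') \leq \rhobcfr(\beta)$. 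The only nontrivial ingredient is the existence of the integrable majorant supplied by the previous Fact; everything else is routine bookkeeping with the convergence theorems.
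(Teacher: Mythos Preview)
Your argument is correct. The paper does not prove this statement itself---it is recorded as a ``Fact'' with the reader directed to \cite{A11} for details---so there is no in-paper proof to compare against; your route via pointwise monotonicity of the integrand $(y e^{\beta F(k)}-1)^{-1}$ together with the monotone and dominated convergence theorems, using the finiteness of $\rhobcfr(\beta)$ from the preceding Fact as the integrable majorant, is the standard and expected one.
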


Recall the definition of the number $y_{\infty}$ and set the number $\rhobckappa (\beta)$ as
\begin{align}
 \rhobckappa (\beta)
 :=
 \rhobcfr (\beta).
\end{align}

\begin{rem}\label{rem_for_thermodynamic_limit}
 Since $\lim_{L \to \infty} \rhobLiralphakappa (\beta) = 0$ for a fixed $\kappa$
 we can define the above $\rhobckappa$ without the term from $\rhobLiralphakappa (\beta)$ here.
 However, if we take the limits $\kappa$ and $L$ simultaneously, the number $\lim_{\kappa \to 0, L \to \infty} \rhobLiralphakappa (\beta)$ may not vanish.
 Moreover, for an infinite Hubbard system, the limit $\lim_{L \to \infty} \rhobLiralphakappa (\beta)$ may not be zero, either.
 Hence we add the suffix $\kappa$ for $\rhobckappa$.

 We may have to think the order or the way of taking the thermodynamic limit and the infrared cutoff.
 As far as the author knows we have no studies on this situation.
\end{rem}

\begin{fact}\label{lem_nine_two_seven}(\cite{A11} Lemma. 9.28)
 Assume the relation
 \begin{align}
  \brhobkappa > \rhobckappa. \label{eq:nine_hund_eight}
 \end{align}
 Then we obtain $y_{\infty} = 1$, where $y_{\infty}$ is defined in Lemma \ref{label2}.
\end{fact}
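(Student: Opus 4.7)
The plan is to argue by contradiction: assume $y_{\infty} > 1$ and show that $\brhobkappa$ must be strictly less than $\rhobckappa$, contradicting the hypothesis \eqref{eq:nine_hund_eight}.

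First I would fix a subsequence $\{L_n\}$ along which $y_{L_n}^{\kappa} \to y_{\infty}$ (provided by Lemma \ref{label2}). If $y_{\infty} > 1$, then there exists $a>0$ such that $y_{L_n}^{\kappa} \geq 1 + a$ for all sufficiently large $n$. Using the decomposition
\begin{align*}
\rhobLk(\beta, y_{L_n}^{\kappa})
=
\rhobLzerok(y_{L_n}^{\kappa}) + \rhobLone(\beta, y_{L_n}^{\kappa}),
\end{align*}
I would separately control the zero-mode and non-zero-mode contributions.

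For the zero-mode part,
\begin{align*}
\rhobLzerok(y_{L_n}^{\kappa})
=
\frac{\Ni}{L_n^d \rbk{y_{L_n}^{\kappa} - 1}} + \rhobLiralphakappa(\beta)
\leq
\frac{\Ni}{a L_n^d} + \rhobLiralphakappa(\beta),
\end{align*}
which tends to $0$ as $n \to \infty$, because $\rhobLiralphakappa(\beta) = L_n^{-d}\Ni \alpha^2 \tpsiebk(R_{-1}^{\kappa}) \to 0$ for fixed $\kappa$. For the non-zero-mode part, I would invoke the first cited fact (Lemma 9.25) to get
\begin{align*}
\rhobLone(\beta, y_{L_n}^{\kappa})
\leq
\rhobfr(\beta, y_{L_n}^{\kappa}) + R_{\mathrm{b}, L_n}(\beta, y_{L_n}^{\kappa}).
\end{align*}
Since $y_{L_n}^{\kappa}$ stays in a compact subset of $(1, \infty)$, the integrand $\frac{1}{ye^{\beta F(k)}-1}$ is uniformly bounded (in $y$) by $\frac{1}{(1+a)e^{\beta F(k)}-1}$, which is integrable over $\mathbb{R}^{d-1}$, so the Lemma 9.25 bound forces $R_{\mathrm{b}, L_n}(\beta, y_{L_n}^{\kappa}) \to 0$. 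Continuity of $\rhobfr(\beta, \cdot)$ on $(1, \infty)$ (Lemma 9.27) then yields $\rhobfr(\beta, y_{L_n}^{\kappa}) \to \rhobfr(\beta, y_{\infty})$.

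Putting the two bounds together gives
\begin{align*}
\brhobkappa
=
\lim_{n \to \infty} \rhobLk(\beta, y_{L_n}^{\kappa})
\leq
\rhobfr(\beta, y_{\infty}).
\end{align*}
Finally, since $y_{\infty} > 1$, the strict monotonicity in Lemma 9.27 yields $\rhobfr(\beta, y_{\infty}) < \rhobcfr(\beta) = \rhobckappa$, so $\brhobkappa < \rhobckappa$, contradicting \eqref{eq:nine_hund_eight}. Hence $y_{\infty} = 1$. The only delicate point is ensuring the interchange of limits in handling $\rhobLone(\beta, y_{L_n}^{\kappa})$ while $y_{L_n}^{\kappa}$ itself varies; this is the main (but routine) obstacle, and it is resolved by the uniform boundedness of the integrand on $y \in [1+a, \infty)$ together with the explicit $1/L$ decay in the Lemma 9.25 estimate for $\RbL$.
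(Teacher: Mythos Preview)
Your argument is correct and is the standard contradiction route for this kind of statement. Note, however, that the paper itself gives \emph{no} proof here: the result is stated as a Fact quoted from \cite{A11} (Lemma~9.28) without argument, so there is nothing in the present paper to compare your proof against line by line. Your use of the decomposition $\rhobLk = \rhobLzerok + \rhobLone$, the $1/L$ decay of $\RbL$ from the cited Lemma~9.25 (uniformly on $y\in[1+a,\infty)$), and the strict monotonicity/continuity of $\rhobfr(\beta,\cdot)$ from Lemma~9.27 is exactly the chain of estimates one would expect in \cite{A11}, and the ``delicate point'' you flag about the $y$-dependence of the constants is handled correctly by the uniform domination on $[1+a,\infty)$.
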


\begin{fact}\label{fact_bec}(\cite{A11} Theorem 9.29)
 Assume $\brhobkappa > \rhobckappa (\beta)$. Then we obtain the relation
 \begin{align}
  \lim_{n \to \infty} \frac{\Nbzerok (y_{L_n}^{\kappa})}{L_n^d}
  =
  \brhobkappa- \rhobcfr (\beta).
 \end{align}
\end{fact}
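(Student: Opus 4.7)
The plan is to reduce Fact \ref{fact_bec} to the single convergence statement $\rhobLone(\beta, \yLk_n) \to \rhobcfr(\beta)$. By the defining relation for $\yLk$ from Lemma \ref{label2} we have $\brhobkappa = \rhobLzerok(\yLk) + \rhobLone(\beta, \yLk)$, and since $\rhobLzerok(y) = \Nbzerok(y)/L^d$, rearranging along the subsequence $L = L_n$ yields
\[
\frac{\Nbzerok(\yLk_n)}{L_n^d} = \brhobkappa - \rhobLone(\beta, \yLk_n).
\]
Thus the assertion reduces to showing $\lim_n \rhobLone(\beta, \yLk_n) = \rhobcfr(\beta)$, and by Fact \ref{lem_nine_two_seven} we already know $\yLk_n \to y_{\infty}^{\kappa} = 1$ along this subsequence.

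To establish that limit I would sandwich the quantity between matching liminf and limsup estimates. For the liminf, fix $\epsilon > 0$ and discard the nonnegative contribution of lattice points with $0 < \abs{k} < \epsilon$ from $\rhobLone(\beta, \yLk_n)$. On the complementary region the summand $1/(\yLk_n e^{\beta F(k)} - 1)$ is continuous in $k$ and, using $\yLk_n \to 1$, converges uniformly on compacta to $1/(e^{\beta F(k)} - 1)$; the usual Riemann-sum argument then yields
\[
\liminf_n \rhobLone(\beta, \yLk_n) \geq \frac{\Ni}{(2\pi)^d}\int_{\abs{k} \geq \epsilon}\frac{dk}{e^{\beta F(k)} - 1},
\]
and sending $\epsilon \downarrow 0$, combined with Fact (\cite{A11} Lemma 9.26), produces $\rhobcfr(\beta)$. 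For the limsup, the upper bound from Fact (\cite{A11} Lemma 9.25) gives $\rhobLone(\beta, y) \leq \rhobfr(\beta, y) + \RbL(\beta, y)$, and Fact (\cite{A11} Lemma 9.27) ensures $\rhobfr(\beta, \yLk_n) \to \rhobcfr(\beta)$ by continuity at $y = 1$, so it remains to show that $\RbL(\beta, \yLk_n)$ vanishes along the subsequence.

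The main obstacle is precisely this joint limit: Fact (\cite{A11} Lemma 9.25) controls $\RbL(\beta, y)$ by $L^{-1}$ times a $(d-1)$-dimensional integral whose behaviour as $y \downarrow 1$ depends on the low-$k$ structure of $\omega$, and the estimate is stated only for $L$ large relative to $y$. I would handle this either by quantifying the dependence of $L_0(\beta, y, \epsilon)$ on $y$ and extracting a diagonal subsequence along which the $1/L_n$ factor still beats the growth of the slice integral, or else bypass the axial-slice bound altogether by proving $\limsup_n [\rhobLone(\beta, \yLk_n) - \rhobfr(\beta, \yLk_n)] \leq 0$ directly from a Riemann-sum comparison on $\Rd \setminus B_\epsilon(0)$ together with the integrability of $1/(\omega(k) - \omega_0)$ near the origin provided by the Assumption. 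Either route closes the sandwich and completes the proof.
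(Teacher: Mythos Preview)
The paper does not actually supply a proof of this statement: it is recorded as a \emph{Fact} and the reader is referred to \cite{A11}, Theorem~9.29, for the argument. So there is no in-paper proof to compare your proposal against.

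On the merits of your sketch: the reduction to $\rhobLone(\beta,y_{L_n}^{\kappa})\to\rhobcfr(\beta)$ via the defining relation from Lemma~\ref{label2} and Fact~\ref{lem_nine_two_seven} is exactly right, and your liminf argument is sound. The honest part of your write-up is the admission that the limsup hinges on controlling $\RbL(\beta,y_{L_n}^{\kappa})$ in the joint limit $L_n\to\infty$, $y_{L_n}^{\kappa}\downarrow 1$, where the bound from Fact (\cite{A11} Lemma~9.25) is stated only for $L$ large relative to a fixed $y>1$ and the $(d-1)$-dimensional slice integral may blow up as $y\downarrow 1$. Your two proposed workarounds are plausible but, as written, are programmatic rather than proofs: the diagonal-subsequence idea would change the subsequence already fixed by Lemma~\ref{label2}, and the direct Riemann-sum comparison needs an explicit estimate showing that the contribution of lattice points in $B_\epsilon(0)\setminus\{0\}$ to $\rhobLone(\beta,1)$ is $o(1)+O(\epsilon')$ uniformly in $L$, which is where the assumption $\int_{|k|\le 1}(\omega(k)-\omega_0)^{-1}\,dk<\infty$ must be invoked quantitatively. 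Until that step is carried out, the sandwich is not closed.
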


We define the critical inverse temparature $\beta_c$ as
\begin{align}
 \rhobckappa (\betac)
 =
 \brhobkappa,
\end{align}
and define the critical temparature $\Tc$ as
\begin{align}
 \Tc
 =
 \frac{1}{\betac}.
\end{align}
Define the temperature as $T :=  1 / \beta$.
Then the condition (\ref{eq:nine_hund_eight}) is equivalent to the condition
\begin{align}
 0 < T < \Tc.
\end{align}
Furtheremore Fact \ref{fact_bec} shows that the mean boson number behaves asymptotically like
\begin{align}
 N_0 (\yLk)
 \sim
 \rbk{\brhobkappa - \rhobckappa (\beta)} L^d \quad (L \to \infty).
\end{align}
Hence this phenomenon is Bose-Einstein condensation.

Next we summarize the behavior for $T \geq \Tc$.
\begin{fact}(\cite{A11} Lemma 9.32)
 Suppose $0 < \brhobkappa \leq \rhobckappa (\beta)$.
 \begin{enumerate}
  \item There exists a real number $b$ such that $\brhobkappa = \rhobfr (\beta, b)$.
        If $\brhobkappa = \rhobcfr (\beta)$, then $b = 1$.
  \item We have $y_{\infty} = b$, where $y_{\infty}$ is defined in Lemma \ref{label2}.
 \end{enumerate}
\end{fact}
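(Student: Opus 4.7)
The plan is to apply the intermediate value theorem for Part 1 and, for Part 2, to pass to the limit $L_n \to \infty$ in the defining relation $\brhobkappa = \rhobLk(\beta, \yLk)$, handling the cases $y_\infty > 1$ and $y_\infty = 1$ separately.

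For Part 1, the preceding Fact (\cite{A11} Lemma 9.27) shows that $y \mapsto \rhobfr(\beta, y)$ is continuous and strictly decreasing on $(1, \infty)$, with boundary values $\rhobcfr(\beta)$ as $y \downarrow 1$ and $0$ as $y \to \infty$. Since $\rhobckappa(\beta) = \rhobcfr(\beta)$, the hypothesis $\brhobkappa \leq \rhobckappa(\beta)$ together with the intermediate value theorem produces a unique $b \in (1, \infty)$ with $\rhobfr(\beta, b) = \brhobkappa$ whenever $\brhobkappa < \rhobcfr(\beta)$; in the boundary case $\brhobkappa = \rhobcfr(\beta)$, no such $b$ exists in $(1, \infty)$, and one sets $b := 1$ consistent with the limit $\lim_{y \downarrow 1} \rhobfr(\beta, y) = \rhobcfr(\beta)$.

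For Part 2, decompose the defining equation as $\brhobkappa = \rhobLzerok(\yLk) + \rhobLone(\beta, \yLk)$. If $y_\infty > 1$, then $\yLk - 1$ stays bounded away from $0$, so $\frac{\Ni}{L_n^d (\yLk - 1)} \to 0$; combined with $\rhobLiralphakappa(\beta) \to 0$ for fixed $\kappa$ (Remark \ref{rem_for_thermodynamic_limit}) this yields $\rhobLzerok(\yLk) \to 0$. For the remaining term, Riemann-sum convergence of $\rbL(\beta, \yLk)$ to $\rhobfr(\beta, y_\infty)$, together with the $O(1/L)$ decay of $\RbL$ supplied by Fact 4.5, gives $\rhobLone(\beta, \yLk) \to \rhobfr(\beta, y_\infty)$. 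Thus $\brhobkappa = \rhobfr(\beta, y_\infty)$, and uniqueness from Part 1 forces $y_\infty = b$. If instead $y_\infty = 1$, exploit the monotonicity of $y \mapsto \rhobLone(\beta, y)$: for any fixed $y' > 1$ and all sufficiently large $L_n$, $\rhobLone(\beta, \yLk) \geq \rhobLone(\beta, y')$; passing first to $L_n \to \infty$ and then $y' \downarrow 1$ yields $\liminf_n \rhobLone(\beta, \yLk) \geq \rhobcfr(\beta)$. Since $\brhobkappa \geq \rhobLone(\beta, \yLk)$, the hypothesis forces $\brhobkappa = \rhobcfr(\beta)$, whence $b = 1 = y_\infty$.

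To close, the two cases combine into a dichotomy consistent with Part 1: if $\brhobkappa < \rhobcfr(\beta)$ the $y_\infty = 1$ branch is impossible, so $y_\infty > 1$ and the first branch gives $y_\infty = b$; if $\brhobkappa = \rhobcfr(\beta)$ the $y_\infty > 1$ branch would yield $\rhobfr(\beta, y_\infty) = \rhobcfr(\beta)$ with $y_\infty > 1$, contradicting the strict inequality in the preceding Fact, so $y_\infty = 1 = b$. \textbf{The main obstacle} is the $y_\infty = 1$ regime: one cannot invoke pointwise Riemann-sum convergence at the boundary, and the clean one-sided bound via monotonicity in $y$ is essential, as is the fact that the zero mode $k=0$ is excluded from $\rhobLone$ so that the divergent contribution is isolated inside $\rhobLzerok$.
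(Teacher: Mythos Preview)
The paper does not supply its own proof of this Fact: it is stated as a citation to \cite{A11} (Lemma~9.32) with no argument given, in the same way as the surrounding Facts in Section~4. So there is nothing in the paper to compare your proposal against directly.

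Your argument is correct and is the natural one. A couple of small remarks: in the $y_\infty > 1$ branch, when you assert Riemann-sum convergence of $\rbL(\beta,\yLk)$ to $\rhobfr(\beta,y_\infty)$ with $\yLk$ varying, it is cleanest to invoke the uniform Lipschitz estimate of Lemma~\ref{label2}'s companion (Lemma~9.24 in \cite{A11}, stated just after Lemma~\ref{label2} in the paper), which lets you replace $\yLk$ by $y_\infty$ up to an $O(|\yLk-y_\infty|)$ error uniformly in $L$, and then pass to the Riemann limit at the fixed parameter $y_\infty$. Your monotonicity/squeeze argument in the $y_\infty=1$ branch is exactly the right way to avoid the boundary singularity, and the closing dichotomy cleanly rules out the mismatched cases.
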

\begin{fact}(\cite{A11} Lemma 9.33)
 Suppose $0 < \brhobkappa \leq \rhobckappa (\beta)$.
 Then we obtain
\begin{align}
 \lim_{n \to \infty} \frac{1}{L_n^d} N_0 (\yLk)
 = 0.
\end{align}
\end{fact}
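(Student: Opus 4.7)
The plan is to use the defining relation $\brhobkappa = \rhobLk(\beta, \yLk) = \rhobLzerok(\yLk) + \rhobLone(\beta, \yLk)$. Evaluated along the subsequence $\cbk{L_n}$ of Lemma~\ref{label2}, this gives
\begin{equation}
\frac{\Nbzerok(y_{L_n}^{\kappa})}{L_n^d}
= \rhobLzerok(y_{L_n}^{\kappa})
= \brhobkappa - \rhobLone(\beta, y_{L_n}^{\kappa}),
\end{equation}
so it suffices to show $\rhobLone(\beta, y_{L_n}^{\kappa}) \to \brhobkappa$ as $n \to \infty$. I would split on the value $b := y_{\infty}$ furnished by the preceding Fact: either $b > 1$ (equivalently $\brhobkappa < \rhobcfr(\beta)$) or $b = 1$ (equivalently $\brhobkappa = \rhobcfr(\beta)$).

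In the case $b > 1$ the argument is direct from the explicit formula $\Nbzerok(y_{L_n}^{\kappa}) / L_n^d = \Ni / (L_n^d (y_{L_n}^{\kappa} - 1)) + \rhobLiralphakappa(\beta)$. Since $y_{L_n}^{\kappa} \to b > 1$, for $n$ large the first summand is dominated by $2 \Ni / ((b-1) L_n^d)$, which tends to $0$. For the second, observe that for fixed $\kappa > 0$ the electronic expectation $\tpsiebk(R_{-1}^{\kappa})$ is independent of $L$ (since $\calMe$ and $\tHek$ live on the electron Hilbert space only), so $\rhobLiralphakappa(\beta)$ is $O(L^{-d})$ and also vanishes.

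The case $b = 1$ is the principal obstacle, because then $y_{L_n}^{\kappa} - 1 \downarrow 0$ and the $\Ni/(y-1)$ term cannot be controlled head-on; one must extract $\rhobLone(\beta, y_{L_n}^{\kappa}) \to \brhobkappa$ indirectly. I would lower-bound $\rhobLone$ by monotonicity: each summand $(y e^{\beta F(k)} - 1)^{-1}$ is strictly decreasing in $y$, so for any fixed $y^{*} > 1$ and all $n$ large enough that $y_{L_n}^{\kappa} < y^{*}$,
\begin{equation}
\rhobLone(\beta, y_{L_n}^{\kappa}) \geq \rhobLone(\beta, y^{*}).
\end{equation}
For fixed $y^{*} > 1$ the integrand has no singularity at $k = 0$ and decays exponentially (by the polynomial decay hypothesis on $e^{-\beta\omega}$ in Assumption~2.1), so a standard Riemann-sum argument---combined with the $O(L^{-1})$ control on the boundary term $\RbL(\beta, y^{*})$ provided by Lemma~9.25 of \cite{A11}---yields $\rhobLone(\beta, y^{*}) \to \rhobfr(\beta, y^{*})$ along $\cbk{L_n}$. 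Sending finally $y^{*} \downarrow 1$ and invoking the continuity at $y = 1$ from Lemma~9.27 of \cite{A11} gives $\rhobfr(\beta, y^{*}) \to \rhobcfr(\beta) = \brhobkappa$. Together with the trivial upper bound $\rhobLone(\beta, y_{L_n}^{\kappa}) \leq \brhobkappa$ coming from $\rhobLzerok \geq 0$, this sandwiches $\rhobLone(\beta, y_{L_n}^{\kappa})$ between two quantities tending to $\brhobkappa$, completing the argument.
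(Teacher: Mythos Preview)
The paper does not supply its own proof of this statement; it is recorded as a \emph{Fact} with a bare citation to \cite{A11}, Lemma~9.33. So there is nothing in the paper to compare against, and the relevant question is simply whether your argument stands on its own. It does.

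Your case split on $y_{\infty}>1$ versus $y_{\infty}=1$ is the natural one and matches the dichotomy in the preceding Fact (Lemma~9.32). In the subcritical case $y_{\infty}=b>1$ the direct computation is immediate, as you observe. In the critical case $y_{\infty}=1$ your sandwich argument is correct: monotonicity of each summand in $y$ gives the lower bound $\rho_{\mathrm{b},L_n,1}(\beta,y_{L_n}^{\kappa})\ge \rho_{\mathrm{b},L_n,1}(\beta,y^{*})$ for $n$ large; for fixed $y^{*}>1$ the summand $k\mapsto (y^{*}e^{\beta F(k)}-1)^{-1}$ is continuous and, by the decay hypothesis in Assumption~2.1, dominated by an integrable envelope, so the Riemann-sum limit $\rho_{\mathrm{b},L_n,1}(\beta,y^{*})\to\rhobfr(\beta,y^{*})$ holds without further ado (your appeal to Lemma~9.25 is harmless but not needed here); and the continuity $\rhobfr(\beta,y^{*})\to\rhobcfr(\beta)$ as $y^{*}\downarrow 1$ is exactly Lemma~9.27. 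The upper bound $\rho_{\mathrm{b},L_n,1}(\beta,y_{L_n}^{\kappa})\le\brhobkappa$ is justified because $\Nbzerok(y)\ge 0$ for $y>1$: the term $\Ni/(y-1)$ is positive, and $\Nbirabk=\Ni\alpha^{2}\tpsiebk(R_{-1}^{\kappa})\ge 0$ since $R_{-1}^{\kappa}$ is a positive operator (it is the Gram quadratic form $\sum_{x,y}\langle\omega^{-1}\lambda_{x}^{\kappa},\omega^{-1}\lambda_{y}^{\kappa}\rangle\,n_{x}n_{y}$ in the commuting nonnegative operators $n_{x}$). You might state this last positivity explicitly, but otherwise the argument is complete.
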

This means that Bose-Einstein condensation does not occur in sufficiently high temparature, of course.
\section{Analysis of the BEC state}
\label{sec-5}

In this section we consider the infinite system in detail.
We omit the super/subscript $\kappa$ in variables because this plays no role in this section.
Furtheremore we remove the infrared cutoff after thermodynamic limit, i.e., take $\kappa$ limit after $L$ limit here.
\subsection{Definitions and notations}
\label{sec-5-1}

We use the famous Araki-Woods representation\cite{AW}.
First we set several notations.
Our new Hilbert space is $\FbAW$, defined by
\begin{align}
 \FbAW
 :=
 \calFb \rbk{\calHb \bigoplus \calHb}.
\end{align}
The conjugation $C$ on $\calHb$ is defined by
\begin{align}
 Cf
 :=
 (\bar{f}_1, \dots, \bar{f}_{\Ni}), \quad f \in \calHb,
\end{align}
where $\bar{f}_j$ means a complex conjugation of $f_j$.
The following vacuum vector $\OmegabAW$
is important because this forms our equilibrium state:
\begin{align}
 \OmegabAW
 :=
 \Omegab \otimes \Omegab.
\end{align}
Next we define operators:
\begin{align}
 \Lfr
 &:=
 \tHek \otimes 1 + 1 \otimes \Lbfr, \\
 \Lbfr
 &:=
 \overline{\Hbm \otimes 1 - 1 \otimes \Hbm}, \\
 \Wrhol (f)
 &:=
 W_{\rho(\beta, \mub)} (f)
 :=
 e^{i \phirhol (f)}, \\
 \phirhol (f)
 &:=
 \phi ((1 + \rho)^{1/2} f) \otimes 1 + 1 \otimes \phi (C \rho^{1/2}f),
 \quad f \in \dom \rbk{1 - e^{- \beta(\omega - \mub)}}^{-1} \cap \dom \omega^{-1/2}, \\
 \rho
 &:=
 \rho (\beta, \mub)
 :=
 e^{- \beta (\omega - \mub)} (1 - e^{-(\omega - mub)})^{-1}
 =
 \rbk{e^{\beta (\omega - \mub)} -1}^{-1},
\end{align}
where the operator $\Lfr$ is the Liouvillean (Hamiltonian) for our full dynamics,
$\Lbfr$ is the Liouvillean for (free) phonons,
$\Wrhol (f)$ is the Weyl operator for the left Araki-Woods algebra,
$\phirhol$ is the Segal's field operator for left Araki-Woods algebra,
and $\rho$ is the phonon density operator.
The operator $\bar{A}$ for a closable operator $A$ is the closure of $A$, here.

We define three states:
\begin{align}
 \psiebk (\Ae)
 &:=
 \frac{\Tre \sqbk{\Ae e^{- \beta \tHek}}}{\Tre \sqbk{e^{- \beta \tHek}}}
 =:
 \bkt{\Psiebk}
  {\Ae \otimes 1 \Psiebk},
  \quad \Psiebk \in \calHe \otimes \overline{\calHe}, \\
 \psibfrbm (\Ab)
 &:=
 \bkt{\OmegabAW}{\Ab \OmegabAW}, \\
 \psibm (\Ae \otimes \Ab)
 &:=
 \psiebk(\Ae) \psibfrbm(\Ab), \quad \Ae \in \calMe, \quad \Ab \in \calMb.
\end{align}
The state $\psiebk$ is the equibrium one for Hubbard electrons,
the vector $\Psiebk$ is the vector state for $\psiebk$,
the state $\psibfrbm$ is the equiribrium state for free phonons in the Araki-Woods representation,
and $\psibm$ is the equiribrium state for the coupling system.

We define several functionals and the set for the next theorem:
\begin{align}
 \Ccob
 :&=
 \set{f \in C_c(\bbR^d ; \bbC^{\Ni})}{f \in \dom \rbk{1 - e^{- \beta (\omega - \mub)}}^{-1} \cap \dom \omega^{-1/2}} \\
 I_L(f)
 :&=
 \psibfrbm(W(f))
 =
 \bkt{f}{\rbk{1 + e^{-\beta (\omega - \mub)}} \rbk{1 - e^{- \beta (\omega - \mub)}}^{-1} f}_{L^2(C_L; \bbC^{\Ni})} \\
 &=:
 I_L^{(1)}(f) + I_L^{(2)}(f), \\
 I_L^{(1)}(f)
 :&=
 \rbk{\frac{2 \pi}{L}}^d \abs{\hat{f}(0)}^2 \frac{y_L + 1}{y_L - 1}, \\
 I_L^{(2)}(f)
 :&=
 \rbk{\frac{2 \pi}{L}}^d \sum_{k \in \Gamma_L^d \setminus \cbk{0}} \abs{\hat{f}(k)}^2 \frac{y_L e^{\beta F(k)} + 1}{y_L e^{\beta F(k)} - 1},
\end{align}
where a function $f$ in the above $I_L$'s is in $\Ccob$.
\subsection{Facts}
\label{sec-5-2}

We use the following facts.
See \cite{A11} for proofs.

\begin{fact}(Lemma. 10.8 \cite{A11})
 Let $f, g \in \Ccob$.
 \begin{enumerate}
  \item Let $\brhob > \rhobc (\beta)$ and $q_0(f) := \frac{2 (2 \pi)^d}{\Ni} \abs{\hat{f}(0)}^2 \rhobzero(\beta)$.
        Then we have
         \begin{align}
          \lim_{L \to \infty} I_{L}^{(1)}(f) = q_0(f),
          \quad \lim_{L \to \infty} I_L^{(2)}(f) = \int_{\bbR^d}
          \abs{\hat{f}(0)}^2 \frac{1 + e^{- \beta F(k)}}{1 - e^{- \beta F(k)}} dk.
         \end{align}
  \item Let $\brhob \leq \rhobc(\beta)$. Then we have
         \begin{align}
          \lim_{L \to \infty} I_{L}^{(1)}(f) = 0,
          \quad \lim_{L \to \infty} I_L^{(2)(f)} = \int_{\bbR^d}
          \abs{\hat{f}(0)}^2 \frac{y_{\infty} + e^{- \beta F(k)}}{y_{\infty} - e^{- \beta F(k)}} dk.
         \end{align}
 \end{enumerate}
\end{fact}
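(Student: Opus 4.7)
The plan is to handle $I_L^{(1)}$ and $I_L^{(2)}$ separately. The first is a single explicit term whose limit is controlled by the asymptotics of $y_L - 1$ furnished by Section 4, and the second is a Riemann sum on the lattice $\GLd$ for which the passage to the integral is effected by Lebesgue dominated convergence.

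For $I_L^{(1)}(f) = (2\pi/L)^d |\hat f(0)|^2 (y_L+1)/(y_L-1)$, all $L$-dependence sits in the factor $L^{-d}(y_L+1)/(y_L-1)$. In the BEC regime I would combine the identity $\Nbzero(y_L) = \Ni/(y_L-1) + \Nbirabk$ with Fact \ref{fact_bec} and the observation recorded in Remark \ref{rem_for_thermodynamic_limit} that $\Nbirabk / L^d \to 0$ for fixed $\kappa$. This gives $\Ni / (L_n^d(y_{L_n}-1)) \to \brhob - \rhobcfr(\beta) = \rhobzero(\beta)$, hence $L_n^d(y_{L_n}-1) \to \Ni/\rhobzero$; together with $y_{L_n} + 1 \to 2$, this yields $I_{L_n}^{(1)}(f) \to 2(2\pi)^d |\hat f(0)|^2 \rhobzero/\Ni = q_0(f)$. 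In the non-BEC regime either $y_\infty > 1$, in which case $(y_L-1)^{-1}$ stays bounded and the prefactor $L^{-d}$ alone forces $I_L^{(1)} \to 0$, or $\brhob = \rhobc$ with $y_\infty = 1$, in which case the same identity together with $\Nbzero(y_{L_n})/L_n^d \to 0$ gives $L_n^d(y_{L_n}-1) \to \infty$, again producing $I_{L_n}^{(1)}(f) \to 0$.

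For $I_L^{(2)}(f)$ I would rewrite the sum as an integral $\int_{\bbR^d} g_L(k)\,dk$ of the step function that equals the summand at each $k \in \GLd \setminus \{0\}$ on its dual cell of volume $(2\pi/L)^d$ and vanishes on the cell containing the origin. Using the identity $(ye^x+1)/(ye^x-1) = 1 + 2/(ye^x-1)$, which is strictly decreasing in $y$ for $y \geq 1$, one dominates $g_L$ by the step-function lift of $h(k) := |\hat f(k)|^2 (e^{\beta F(k)}+1)/(e^{\beta F(k)}-1)$. Since $f \in C_c$, the function $\hat f$ is Schwartz and handles $|k|$ large, while near $k=0$ the ratio is asymptotic to $2/(\beta F(k))$ and the standing assumption $\int_{|k|\leq 1} dk/(\omega(k)-\omega_0) < \infty$ gives local integrability; so $h \in L^1(\bbR^d)$ and the step-function lifts are uniformly integrable. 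Pointwise $g_{L_n}(k) \to |\hat f(k)|^2 (y_\infty e^{\beta F(k)}+1)/(y_\infty e^{\beta F(k)}-1)$ for $k \neq 0$, and dominated convergence delivers the stated integrals, with $y_\infty = 1$ in case (i) and $y_\infty$ from Lemma \ref{label2} in case (ii).

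The main obstacle is the uniform infrared control of the Riemann sum in the BEC regime: as $y_{L_n} \to 1$ the dominating integrand becomes singular like $1/F(k)$ near the origin, and the argument hinges precisely on the standing hypothesis $\int_{|k|\leq 1} dk/(\omega(k)-\omega_0) < \infty$ to preserve integrability. The excision of the single lattice point $k=0$ from the sum (its dual cell has volume $(2\pi/L)^d \to 0$) is what keeps the would-be divergent condensate contribution inside $I_L^{(1)}$ and prevents it from leaking into $I_L^{(2)}$.
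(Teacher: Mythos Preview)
The paper does not supply its own proof of this statement: it is recorded as a Fact with the instruction ``See \cite{A11} for proofs,'' so there is nothing in the present paper to compare against. Your outline is essentially what one expects the argument in \cite{A11} to be, and the structure is sound: the $I_L^{(1)}$ limit is read off directly from the asymptotics $L_n^{-d}\Nbzero(y_{L_n}) \to \rhobzero(\beta)$ (resp.\ $0$) established in Section~4, and the $I_L^{(2)}$ limit is a Riemann-sum-to-integral passage governed by dominated convergence, with the monotonicity of $(ye^x+1)/(ye^x-1)$ in $y$ reducing to the worst case $y=1$ and the standing infrared hypothesis $\int_{|k|\le 1} dk/(\omega(k)-\omega_0) < \infty$ supplying the integrable majorant near the origin.

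One small correction: for $f \in C_c$ the transform $\hat f$ is smooth but need not be Schwartz; decay at infinity is not automatic. What you actually need is only that $|\hat f|^2 \in L^1(\bbR^d)$, and this follows from $f \in L^2$ by Plancherel, since for large $|k|$ the factor $(y_\infty e^{\beta F(k)}+1)/(y_\infty e^{\beta F(k)}-1)$ is bounded. Also, to make the dominated convergence rigorous you should be explicit that the step-function majorant is \emph{itself} dominated by a fixed $L$-independent integrable function; this uses mild regularity of $F$ (the assumption $r \in C^1_{+,\infty}$) to compare the value of the integrand at a lattice point with its values on the surrounding dual cell. These are routine refinements and do not affect the validity of your plan.
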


We set the following functionals.
\begin{align}
 q_0(f)
 :&=
 \frac{2 (2 \pi)^d}{\Ni} \abs{\hat{f}(0)}^2 \rhobzero(\beta), \\
 q_1(f)
 :&=
 \norm{(1 + e^{- \beta \omega})(1 - e^{- \beta \omega}) f}^2, \\
 \psibbecbfr(W(f))
 :&=
 e^{-\frac{1}{4} \rbk{q_0(f) + q_1(f)}}, \quad f \in \calDbone, \\
 q_2(f)
 :&=
 \bkt{f}{(y_{\infty + e^{- \beta \omega}}) (y_{\infty} - e^{- \beta \omega})^{-1} f}, \quad f in L^2(\bbR^d; \bbC^{\Ni}), \\
 \psi_{\mathrm{b}, \beta , \mathrm{fr}, 2} (W(f))
 :&=
 e^{- \frac{1}{4} q_2(f)}, \quad f \in L^2(\bbR^d; \bbC^{\Ni})
\end{align}
where
\begin{align}
 \calDbone
 :=
 L^1(\bbR^d; \bbC^{\Ni}) \cap \dom \rbk{1 - e^{- \beta \omega}}^{-1/2} \cap \dom \omega^{-1/2}
\end{align}

\begin{fact}(\cite{A11} Theorem.10.9)
 \begin{enumerate}
 \item Let $\brhob > \rhobc (\beta)$ and $f \in \Ccob$.
       Then we have
       \begin{align}
        \lim_{n \to \infty} \tpsibLnm (\Ae \otimes \WB(f))
        =
        \tpsiebm(\Ae \otimes 1 e^{i \alpha \tne(f)}) \psibbecbfr(W(f)).
       \end{align}
 \item Let $0 < \brhob \leq \rhobc (\beta), f \in \Ccob$.
       Then we have
       \begin{align}
        \lim_{n \to \infty} \tpsibLnm (\Ae \otimes \WB(f))
        =
        \tpsiebm(\Ae \otimes 1 e^{i \alpha \tne(f)}) \psibbecbfrsecond(W(f)).
       \end{align}
 \item Let $\brhob = \rhobc(\beta)$.
       Then we have
       \begin{align}
        \lim_{n \to \infty} \tpsibLnm (\Ae \otimes \WB(f))
        =
        \tpsiebm(\Ae \otimes 1 e^{i \alpha \tne(f)}) \psibbecbfr(W(f)).
       \end{align}
 \end{enumerate}
\end{fact}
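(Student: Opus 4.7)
The plan is to reduce the problem to the finite-volume factorisation established in Theorem~\ref{label1} and then take the $L_n \to \infty$ limit using the preceding Fact (Lemma~10.8 of \cite{A11}) on the $I_L$-functionals. First I apply Theorem~\ref{label1} at volume $L_n$ (with the cutoff $\kappa$ suppressed from notation, as in this section) to obtain
\begin{align*}
 \tpsibLnm\rbk{\Ae \otimes \WB(f)}
 =
 \tpsiebm\rbk{\Ae \otimes 1 \cdot e^{i\alpha\tne(f)}} \cdot \psibfrbm\rbk{W(f)}.
\end{align*}
The electron factor is $L_n$-independent: $\tHek$ acts on the finite-dimensional $\calHe$ and depends on no external parameter, since the chemical potential $\mub$ fixed by $\brhob = \rho_{\mathrm{b},L_n}(\beta,\yLk)$ enters only the boson sector. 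Hence the electron factor already equals its desired limit and requires no further argument.

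The content of the theorem lies entirely in the boson Weyl functional. Because $\psibfrbm$ restricted to the CCR algebra is the quasi-free gauge-invariant Gaussian state with two-point function $(1+e^{-\beta(\omega-\mub)})(1-e^{-\beta(\omega-\mub)})^{-1}$, a direct computation of $\bkt{\OmegabAW}{\Wrhol(f)\OmegabAW}$ gives
\begin{align*}
 \psibfrbm\rbk{W(f)} = \exp\rbk{-\tfrac{1}{4} I_{L_n}(f)},
 \quad I_{L_n} = I_{L_n}^{(1)} + I_{L_n}^{(2)}.
\end{align*}
The subsequence $\cbk{L_n}$ from Lemma~\ref{label2} realises $\yLk \to y_\infty$, so the quoted Lemma~10.8 of \cite{A11} supplies the two termwise limits separately, and it remains only to assemble them in each of the three density regimes.

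In case~(1), Fact~\ref{fact_bec} pins down $y_\infty = 1$ at the precise rate for which $1/((\yLk-1)L_n^d) \to \rhobzero(\beta)/\Ni$; this is exactly what makes $I_{L_n}^{(1)}(f) \to q_0(f)$ survive, while the non-zero modes give $I_{L_n}^{(2)}(f) \to q_1(f)$ by dominated convergence, and exponentiating the sum reproduces $\psibbecbfr(W(f))$. In case~(2) one has $y_\infty = b \geq 1$, so the $L_n^{-d}$ prefactor kills $I_{L_n}^{(1)}(f)$ while $I_{L_n}^{(2)}(f) \to q_2(f)$, giving $\psibbecbfrsecond(W(f))$. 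In case~(3) one still has $y_\infty = 1$, but $\brhob = \rhobc(\beta)$ forces $\rhobzero(\beta) = 0$, so $q_0 \equiv 0$ and the $q_2$-integrand formally collapses to $q_1$; the limit therefore coincides with $\psibbecbfr(W(f))$.

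The main delicate point is case~(1): the competition between $\yLk - 1 \to 0$ and the Riemann-sum weight $(2\pi/L_n)^d$ inside $I_{L_n}^{(1)}$ has to be controlled, and this is the combinatorial heart of BEC, where the macroscopic occupation $\rhobzero(\beta)$ of the zero mode is injected into the Weyl functional. In cases~(2) and~(3) the only technical check is integrability of the limiting kernel near $k=0$, which is automatic from $f \in \Ccob$ together with the smoothness and fast decay of $\hat f$.
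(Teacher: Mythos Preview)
The paper does not supply its own proof of this statement: it is quoted verbatim as a Fact from \cite{A11} (Theorem~10.9) and left unproved. So there is no ``paper's proof'' to compare against; your task was effectively to reconstruct the argument behind the citation.

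Your reconstruction is the standard and correct one: factorise the finite-volume expectation into an electron piece and a free-boson Weyl expectation, observe that the electron piece is $L_n$-independent because $\tHek$ lives on the fixed finite-dimensional $\calHe$ and carries no $L$- or $\mub$-dependence, write the boson piece as $\exp(-\tfrac14 I_{L_n}(f))$, and then invoke the immediately preceding Fact (Lemma~10.8 of \cite{A11}) to evaluate $\lim_n I_{L_n}^{(1)}$ and $\lim_n I_{L_n}^{(2)}$ in each density regime. Your case-by-case assembly is right, including the observation that in case~(3) one has $\rhobzero(\beta)=0$ so that $q_0\equiv 0$ and the two limiting functionals coincide.

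One small caution on bookkeeping: you invoke Theorem~\ref{label1} for the factorisation, but that theorem is stated for the interacting state $\psibmk$, whereas the Fact is phrased for $\tpsibLnm$, which in Section~3 is \emph{defined} as the tensor product $\tpsiebk\otimes\psibfrbm$. The appearance of the twist $e^{i\alpha\tne(f)}$ on the right-hand side of the Fact indicates that the intended object on the left is really the interacting state (so your appeal to Theorem~\ref{label1} is morally the right move); the paper's notation here is not fully consistent. This does not affect the analytic content of your argument---the only genuine limit is the bosonic one, and you have handled it correctly.
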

\begin{prop}(\cite{A11} Prop.10.10)
For any $t \in \bbR$, we get
 \begin{align}
  \tpsibeta \rbk{e^{it \tHfr} \Ae \otimes 1 \WB(f) e^{-it \tHfr}}
  =
  \psibeta(\Ae \otimes 1 \WB(f)).
 \end{align}
\end{prop}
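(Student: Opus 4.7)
The plan is to deduce the identity from two observations: first, that the state $\tpsibeta$ is stationary under the free dynamics generated by $\tHfr$; and second, that on observables of the form $\Ae \otimes 1 \cdot \WB(f)$ the functionals $\tpsibeta$ and $\psibeta$ can be identified via the thermodynamic limit of Theorem \ref{label1} (equivalently Fact 10.9 in Section 5.2). Because $\tHfr = \tHek \otimes 1 + 1 \otimes \Lbfr$ is a sum of two commuting operators acting on disjoint tensor factors, the unitary group factorises as $e^{it\tHfr} = e^{it\tHek} \otimes e^{it\Lbfr}$, so the conjugation on the left--hand side acts factor by factor on $\Ae$ and $\WB(f)$.

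For the electron factor I would apply cyclicity of the trace in the definition of the Hubbard equilibrium state: since $e^{-\beta \tHek}$ commutes with $e^{\pm it \tHek}$, one immediately obtains
\[
\tpsiebm\bigl(e^{it \tHek} \Ae e^{-it \tHek}\bigr) = \tpsiebm(\Ae).
\]
For the phonon factor I would invoke the defining property of the Araki--Woods representation: the cyclic vector $\OmegabAW$ that implements $\psibfrbm$ is annihilated by the Liouvillean $\Lbfr$, so $e^{it \Lbfr} \OmegabAW = \OmegabAW$ and therefore
\[
\psibfrbm\bigl(e^{it \Lbfr} \WB(f) e^{-it \Lbfr}\bigr) = \bkt{\OmegabAW}{\WB(f) \OmegabAW} = \psibfrbm(\WB(f)).
\]
Since $\tpsibeta$ factorises as a product on simple tensors $\Ae \otimes 1 \cdot \WB(f)$, these two invariances combine to yield
\[
\tpsibeta\bigl(e^{it\tHfr} \Ae \otimes 1 \cdot \WB(f) e^{-it\tHfr}\bigr) = \tpsibeta(\Ae \otimes 1 \cdot \WB(f)).
\]

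To close the argument I would identify the right--hand side with $\psibeta(\Ae \otimes 1 \cdot \WB(f))$ using the thermodynamic limit of Theorem \ref{label1}: the unitary $\Vk$ absorbs the interaction into a multiplicative factor $e^{i\alpha \tne(f)}$ acting on the electron argument, and because $\tne(f)$ is a function of the commuting number operators $n_x$ it is preserved under the above time translation. The main obstacle is precisely this last bookkeeping step: tracking what survives the joint $L \to \infty$ and $\kappa \downarrow 0$ limits is somewhat delicate because Section 5 silently drops the $\kappa$ superscript, and one must verify that the identification $\tpsibeta \leftrightarrow \psibeta$ holds uniformly in $t$ on the class of test functions $f$ considered. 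Once those identifications are pinned down, the dynamical invariance itself is a routine application of trace cyclicity on the electron side and the Araki--Woods modular structure on the phonon side.
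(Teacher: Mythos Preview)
Your overall strategy---factor the free dynamics as electron $\times$ phonon and prove invariance on each tensor factor---is exactly what the paper does, and your electron argument via trace cyclicity is fine. The difference, and the one real gap, is on the phonon side. You argue invariance by invoking $\Lbfr\,\OmegabAW = 0$, which yields stationarity of $\psibfrbm$. But in the regime of the proposition the phonon factor of $\tpsibeta$ is $\psibbecbfr$ (this is what Fact~10.9 produces as the thermodynamic limit), and $\psibbecbfr$ differs from the Araki--Woods vacuum state by the condensate factor $e^{-q_0(f)/4}$; it is \emph{not} the vector state of $\OmegabAW$ in the ordinary Araki--Woods representation, so $\Lbfr\,\OmegabAW = 0$ does not by itself cover it. The paper sidesteps this by working at the level of Weyl operators: it records $e^{it\Hbfr} W(f) e^{-it\Hbfr} = W(e^{it\omega}f)$, after which stationarity of $\psibbecbfr$ follows because $q_0(e^{it\omega}f) = c(\brhob,\beta)\,\lvert e^{it\omega_0}\hat f(0)\rvert^2 = q_0(f)$ and $q_1(e^{it\omega}f) = q_1(f)$. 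Your argument can be completed by exactly this one-line check of $q_0$, so the gap is easy to close; but as written it misses the condensate contribution.

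Your final paragraph, passing through Theorem~\ref{label1} and thermodynamic limits to identify $\tpsibeta$ with $\psibeta$, is unnecessary here: in Section~5 both symbols refer to the same infinite-volume limit state (compare Fact~10.9), so there is nothing further to identify and no uniformity-in-$t$ issue to worry about.
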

\begin{proof}
 \begin{align}
  \tpsibeta \rbk{e^{it \tHfr} \Ae \otimes 1 \WB(f) e^{-it \tHfr}}
  =
  \psiebeta \rbk{ e^{it \tHe} \Ae e^{i \alpha \tne(f)} e^{-it \tHe}} \psibbecbfr \rbk{e^{it \Hbfr} W(f) e^{-it \Hbfr}}.
 \end{align}
Note that it holds $e^{it \Hbfr} W(f) e^{-it \Hbfr} = W(e^{it \omega} f)$.
\end{proof}
\subsection{Direct integral decomposition of BEC state and gauge symmetry breaking}
\label{sec-5-3}

Throughout this section we the following
\begin{assump}
 \begin{align}
  \rhobzero (\beta) > 0.
 \end{align}
\end{assump}

We denote $\Omegabg \in \calF_{\mathrm{b}}(\bbC) = L^2(\bbR)$ which is the Fock vacuum for $\calF_{\mathrm{b}}(\bbC)$.
The operator $\phi (z), z \in \bbC$ is Segal's field operator for $\calF_{\mathrm{b}}(\bbC)$.
For simplicity we set the constant $c(\brhob, \beta)$,
\begin{align}
 c(\brhob, \beta)
 :=
\frac{2 (2 \pi)^d \rhobzero(\beta)}{\Ni}.
\end{align}

Then we get the following relaitions:
\begin{align}
 q_0(f)
 &=
 c(\brhob, \beta) \abs{\hat{f}(0)}^2, \quad f \in \calDbone, \\
 \bkt{\Omegabg}{e^{i \phi(z)} \Omegabg}
 &=
 e^{- \frac{1}{4} \abs{z}2}, \\
 e^{- \frac{1}{4} q_0(f)}
 &=
 \bkt{\Omegabg}{e^{i \phi(z^{\theta}_f)} \Omegabg}, \quad
 z_{f}^{\theta}
 :=
 c(\brhob, \beta) e^{i \theta} \hat{f} (0), \theta \in \bbR.
\end{align}

In the following we use the following relation:
\begin{lem}(\cite{A11} Lemma. 10.13)
For $a,b > 0$ we have
 \begin{align}
  \int_0^{\infty} e^{-ar} J_0(\sqrt{br}) dr
  =
  \frac{1}{a} e^{-b / 4a},
 \end{align}
 where the function $J_0$ is the 0-th Bessel function.
\end{lem}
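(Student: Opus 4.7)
The plan is to exploit the power series representation of $J_0$ and integrate term by term. Recall the series
\begin{align}
J_0(x) = \sum_{n=0}^{\infty} \frac{(-1)^n}{(n!)^2} \rbk{\frac{x}{2}}^{2n},
\end{align}
which converges absolutely for every $x \in \bbR$. Substituting $x = \sqrt{br}$ gives the representation
\begin{align}
J_0(\sqrt{br}) = \sum_{n=0}^{\infty} \frac{(-1)^n}{(n!)^2} \rbk{\frac{br}{4}}^n.
\end{align}

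First, I would justify interchanging this sum with the integral $\int_0^{\infty} e^{-ar} \cdot dr$ by Fubini--Tonelli. Replacing the alternating series by its term-wise absolute value yields the modified Bessel series $I_0(\sqrt{br}) = \sum_n (br/4)^n / (n!)^2$, and using $\int_0^{\infty} r^n e^{-ar} dr = n! / a^{n+1}$ one gets
\begin{align}
\sum_{n=0}^{\infty} \frac{1}{(n!)^2} \rbk{\frac{b}{4}}^n \int_0^{\infty} r^n e^{-ar} dr = \frac{1}{a} \sum_{n=0}^{\infty} \frac{1}{n!} \rbk{\frac{b}{4a}}^n = \frac{1}{a} e^{b/4a} < \infty,
\end{align}
so Fubini applies and the interchange is legal.

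Second, applying the same moment identity term by term to the original alternating series yields
\begin{align}
\int_0^{\infty} e^{-ar} J_0(\sqrt{br}) dr = \sum_{n=0}^{\infty} \frac{(-1)^n}{(n!)^2} \rbk{\frac{b}{4}}^n \frac{n!}{a^{n+1}} = \frac{1}{a} \sum_{n=0}^{\infty} \frac{(-1)^n}{n!} \rbk{\frac{b}{4a}}^n = \frac{1}{a} e^{-b/4a},
\end{align}
which is precisely the claimed formula.

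There is essentially no serious obstacle: this is the classical Laplace transform of $t \mapsto J_0(\sqrt{bt})$, and the only subtle point is the exchange of summation and integration, which is handled uniformly by the dominating $I_0$-series computed above. An alternative route would be to substitute $r = s^2$ and use the Gaussian integral representation $J_0(x) = \frac{1}{2\pi} \int_0^{2\pi} e^{ix\cos\theta} d\theta$ together with completion of squares, but the power series method is the most transparent.
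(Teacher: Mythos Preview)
Your proof is correct: the power-series expansion of $J_0$, the Fubini--Tonelli justification via the dominating $I_0$-series, and the term-by-term evaluation using $\int_0^\infty r^n e^{-ar}\,dr = n!/a^{n+1}$ are all valid and lead cleanly to the claimed identity.

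There is nothing to compare against in the paper itself: the lemma is merely quoted from the reference \cite{A11} (Lemma 10.13) without proof. Your argument supplies precisely the standard derivation one would expect, and the alternative Gaussian-integral route you mention is in fact the approach suggested by the lemma that immediately follows in the paper (the angular average identity $\frac{1}{2\pi}\int_0^{2\pi} e^{i(p\cos\theta+q\sin\theta)}\,d\theta = J_0(\sqrt{p^2+q^2})$), so either path would be acceptable here.
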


\begin{lem}(\cite{A11} Lemma. 10.15)
For $p,q > 0$, we have
 \begin{align}
  \frac{1}{2 \pi} \int_0^{2 \pi} e^{i (p \cos \theta + q \sin \theta)} d \theta
  =
  J_0(\sqrt{p^2 + q^2}).
 \end{align}
\end{lem}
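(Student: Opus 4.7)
The plan is to reduce the integrand to a single cosine by an amplitude-phase decomposition and then invoke the standard integral representation of the Bessel function $J_0$.

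First I would rewrite the exponent in the form $p \cos \theta + q \sin \theta = R \cos (\theta - \phi)$, where $R := \sqrt{p^2 + q^2}$ and $\phi \in (0, \pi/2)$ is chosen so that $\cos \phi = p / R$ and $\sin \phi = q / R$; existence and uniqueness of such $\phi$ is immediate from $p, q > 0$. Substituting, the left-hand side becomes
\begin{align}
\frac{1}{2 \pi} \int_0^{2 \pi} e^{i R \cos (\theta - \phi)} d \theta.
\end{align}

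Next I would use the fact that the integrand $\theta \mapsto e^{i R \cos (\theta - \phi)}$ is $2 \pi$-periodic in $\theta$, so the change of variable $\theta' = \theta - \phi$ shifts the integration interval by $\phi$ without changing the value of the integral. Hence
\begin{align}
\frac{1}{2 \pi} \int_0^{2 \pi} e^{i R \cos (\theta - \phi)} d \theta
=
\frac{1}{2 \pi} \int_0^{2 \pi} e^{i R \cos \theta'} d \theta'.
\end{align}

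Finally I would invoke the classical integral representation of the zeroth Bessel function of the first kind,
\begin{align}
J_0 (x) = \frac{1}{2 \pi} \int_0^{2 \pi} e^{i x \cos \theta} d \theta, \quad x \in \bbR,
\end{align}
applied at $x = R = \sqrt{p^2 + q^2}$. This yields exactly the claimed identity. There is essentially no obstacle here: the result is a standard identity, and the only step that requires even a moment of care is confirming that the shift $\theta \mapsto \theta - \phi$ does not change the value of the integral, which is immediate from periodicity.
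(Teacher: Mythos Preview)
Your proof is correct: the amplitude--phase reduction $p\cos\theta+q\sin\theta=R\cos(\theta-\phi)$, the shift by periodicity, and the standard integral representation $J_0(x)=\frac{1}{2\pi}\int_0^{2\pi}e^{ix\cos\theta}\,d\theta$ together give the identity with no gaps. The paper itself does not supply a proof of this lemma---it simply cites \cite{A11}, Lemma~10.15---so there is no in-paper argument to compare against; your route is the natural one and would be exactly what any standard reference provides.
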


\begin{lem}
For any function $f \in L^1(\bbR^d; \bbC^{\Ni})$, we have
\begin{align}
 e^{- \frac{1}{4} q_0(f)}
 =
 \int_0^{\infty} dr \int_0^{2 \pi} d \theta \,  e^{-r}
 \exp \sqbk{i \frac{\sqrt{c(\rhobzero, \beta) r}}{2} \rbk{e^{i \theta} \hat{f}(0) + \overline{e^{i \theta} \hat{f}(0)}}}
\end{align}
\end{lem}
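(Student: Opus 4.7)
The plan is to evaluate the double integral on the right-hand side by performing the $\theta$-integration first, which turns it into a Bessel function via Lemma 10.15, and then performing the $r$-integration using Lemma 10.13, which collapses everything into a Gaussian factor matching $e^{-q_0(f)/4}$.

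\textbf{Step 1 (rewrite the integrand).} Observe that $e^{i\theta}\hat{f}(0) + \overline{e^{i\theta}\hat{f}(0)} = 2\,\mathrm{Re}(e^{i\theta}\hat{f}(0))$. Writing $\hat{f}(0) = x + iy$ with $x, y \in \bbR$, one has $\mathrm{Re}(e^{i\theta}\hat{f}(0)) = x\cos\theta - y\sin\theta$. Setting $c := c(\brhob, \beta)$ for brevity, the exponent in the integrand becomes
\begin{align}
 i\sqrt{cr}\,(x\cos\theta - y\sin\theta).
\end{align}
(The case $\hat{f}(0) = 0$ is trivial since then $q_0(f) = 0$ and the angular integrand is $1$.)

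\textbf{Step 2 (apply Lemma 10.15 to the $\theta$-integration).} With the identifications $p := \sqrt{cr}\,x$ and $q := -\sqrt{cr}\,y$, Lemma 10.15 yields
\begin{align}
 \int_0^{2\pi} e^{i\sqrt{cr}(x\cos\theta - y\sin\theta)}\,d\theta
 =
 2\pi\, J_0\bigl(\sqrt{p^2+q^2}\bigr)
 =
 2\pi\, J_0\bigl(\sqrt{c r\,\abs{\hat{f}(0)}^2}\bigr).
\end{align}
This reduces the double integral to a one-dimensional Laplace transform of $J_0$.

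\textbf{Step 3 (apply Lemma 10.13 to the $r$-integration).} Using Lemma 10.13 with $a = 1$ and $b = c\abs{\hat{f}(0)}^2$,
\begin{align}
 \int_0^{\infty} e^{-r}\,J_0\bigl(\sqrt{c\abs{\hat{f}(0)}^2\,r}\bigr)\,dr
 =
 e^{-c\abs{\hat{f}(0)}^2 /4}.
\end{align}
Recalling the identity $q_0(f) = c(\brhob,\beta)\abs{\hat{f}(0)}^2$ recorded just before this lemma, the right-hand side is exactly $e^{-q_0(f)/4}$, completing the derivation (up to a $(2\pi)^{-1}$ normalization in the angular measure, which the statement appears to absorb into the measure $d\theta$).

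I do not anticipate a serious obstacle: both lemmas are quoted, and the computation is essentially bookkeeping once $\hat{f}(0)$ is written in Cartesian coordinates. The only subtlety is the treatment of the phase of $\hat{f}(0)$: one must check that the $\theta$-integral is insensitive to it, which is automatic because translating $\theta$ by a constant leaves the integration range $[0, 2\pi]$ invariant. This invariance is what makes the parametrization $z_f^\theta := c(\brhob,\beta)e^{i\theta}\hat{f}(0)$ produce an identity valid for every complex value of $\hat{f}(0)$.
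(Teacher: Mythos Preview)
Your proof is correct and follows exactly the route the paper intends: the two lemmas quoted immediately before (the Laplace transform of $J_0$ and the angular Bessel identity) are placed there precisely so that this lemma follows by chaining them, and the paper gives no further argument. Your observation about the missing $(2\pi)^{-1}$ is also right---it is absorbed into the probability measure $\chi$ defined a few lines later, so the statement as printed carries a harmless normalization typo.
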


We define the following algebra.
\begin{align}
 \calMbzero
 :=
 \text{*-alg} \set{W(f)}{f \in \calDbone}.
\end{align}
For $(r, \theta) \in [0, \infty) \times [0, 2 \pi]$ we define the state
\begin{align}
 \psi_{\mathrm{b, fr}, \beta}^{r, \theta} \rbk{W(f)}
 :=
 \exp \sqbk{i \frac{\sqrt{c(\rhobzero, \beta) r}}{2} \Re \rbk{e^{i \theta \hat{f}(0)}}} e^{- \frac{1}{4} q_0(f)}.
\end{align}
We define the probability measure $\chi$ on $[0, \infty) \times [0, 2 \pi]$ as follows:
\begin{align}
 \chi (B \times C)
 :=
 \frac{1}{2 \pi} \int_B e^{-r} dr \int_C d \theta, \quad B \in \calB([0, \infty)), \quad C \in \calB([0, 2 \pi]),
\end{align}
where $\calB(A)$ is the set of Borel sets on $A$.

\begin{rem}
 \begin{align}
  \psibfrbrtheta (W(e^{i \alpha f}))
  =
   \psi_{\mathrm{b, fr}, \beta}^{r, \theta + \alpha} \rbk{W(f)}.
 \end{align}
\end{rem}

For $t, s \in \bbR$ and $f, g \in \calDbone$, we obtain
\begin{align}
 \psibfrbrtheta (W(tf)W(sg))
 =
 e^{i \alpha (t w_f + s w_g)} \psibfrbone (W(tf)W(sg)),
\end{align}
where we set $a = \sqrt{c(\brhob, \beta)r}$, $w_f := \Re \sqbk{e^{i \theta} \hat{f} (0)}$, $f \in \calDbone$, and
\begin{align}
 \psibfrbone(W(f))
 :=
 e^{- \frac{1}{4} q_1(f)}, \quad f \in \calDbone.
\end{align}
Thus two-point functions take the form, for $f, g \in \calDbone$,
\begin{align}
 G_{\mathrm{b, fr}, \beta}^{r, \theta} (f, g)
 =
 -\frac{\partial^2}{\partial t \partial s} \psibfrbrtheta \rbk{W(tf) W(sg)}
 =
 \frac{1}{2} c(\brhob, \beta) r \hat{f}(0) \overline{\hat{g}(0)} + \frac{1}{2} \sqbk{q_1 (g, f) - \bkt{g}{f}}.
\end{align}
Then
\begin{align}
 \frac{1}{2} \rbk{q_1 (g, f) - \bkt{g}{f}}
 =
 \bkt{g}{e^{- \beta \omega} \rbk{1 - e^{- \beta \omega}}^{-1} f}
 =\int_{\bbR^{2d}} \frac{c(\brhob, \beta) r}{2 (2 \pi)^d} f(x) \overline{g(y)} dx dy.
\end{align}
Hence we can define
\begin{align}
 \rho_{\mathrm{b}}^{r, \theta}(x)
 :=
 \frac{\Ni c(\brhob, \beta) r}{2 (2 \pi)^d} + \frac{\Ni}{(2 \pi)^d} \int_{\bbR^d} \frac{1}{e^{\beta \omega(k)} -1} dk
 =
 \frac{\Ni c(\brhob, \beta) r}{2 (2 \pi)^d} + \rhobcfr(\beta).
\end{align}
\section{GNS representation of the BEC state}
\label{sec-6}

We define the new Hilbert space.
\begin{align}
 \FbBEC
 :=
 L^2([0, \infty) \times [0, 2 \pi], d \chi; \FbAW)
\end{align}
For $f \in \calDbone$, set
\begin{align}
 e_f(r, \theta)
 :=
 \exp \sqbk{i \sqrt{c(\brhob, \beta) r} \, \Re(e^{i \theta \hat{f}(0)})}, \quad r \geq 0, \quad \theta \in [0, 2 \pi].
\end{align}
We define operators $\WrholBEC (f)$ on $\FbBEC$ as follows:
\begin{align}
 \rbk{\WrholBEC \Psi}(r, \theta)
 :=
 e_f(r, \theta) \rbk{\Wrhol (f) \Psi} (r, \theta).
\end{align}
Setting
\begin{align}
 \OmegabBEC : [0, \infty) \times [0, 2 \pi] \to \FbAW ; \quad \OmegabBEC(r, \theta)
 :=
 \OmegabAW,
\end{align}
then the function $\OmegabBEC$ is in $\FbBEC$, $\norm{\OmegabBEC} = 1$, and
\begin{align}
 \psibBEC(W(f))
 =
 \bkt{\OmegabBEC}{\WrholBEC(f) \OmegabBEC}.
\end{align}

\begin{lem}
 Suppose $f \in L^1([0, \infty) \times [0, 2 \pi], d \chi)$, the numbers $k_1$ and $k_2$ are in $\bbR$.
 Then the relation
 \begin{align}
  \int_{\polarplane} f(r, \theta) e^{i \sqrt{r} (k_1 \cos \theta + k_2 \theta)} d \chi (r, \theta) = 0
 \end{align}
leads $f=0$.
\end{lem}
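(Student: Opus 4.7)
The plan is to reduce the statement to the uniqueness of the Fourier transform on $\bbR^2$ by passing to Cartesian coordinates. (I read the exponent as $i\sqrt{r}(k_1\cos\theta+k_2\sin\theta)$, matching the setup of Lemma 6.3 and Lemma 6.4 above; the $k_2\theta$ in the statement appears to be a typo, and the hypothesis is meant to hold for every $(k_1,k_2)\in\bbR^2$.)

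First I would introduce the polar-to-Cartesian substitution $u=\sqrt{r}\cos\theta$, $v=\sqrt{r}\sin\theta$, mapping $\polarplane$ onto $\bbR^2$ (bijective up to the measure-zero origin and the seam $\theta=0\equiv 2\pi$). A direct Jacobian computation gives $du\,dv=\tfrac12\,dr\,d\theta$, so
\begin{align}
d\chi(r,\theta)=\tfrac{1}{2\pi}e^{-r}\,dr\,d\theta=\tfrac{1}{\pi}e^{-(u^2+v^2)}\,du\,dv,
\end{align}
and the phase becomes $\sqrt{r}(k_1\cos\theta+k_2\sin\theta)=k_1u+k_2v$.

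Next, I would define $\tilde f(u,v):=f(u^2+v^2,\arg(u+iv))$ and set $g(u,v):=\tfrac{1}{\pi}\tilde f(u,v)e^{-(u^2+v^2)}$. Since $f\in L^1(\polarplane,d\chi)$ the change of variables gives $g\in L^1(\bbR^2)$, and the hypothesis becomes
\begin{align}
\hat g(-k_1,-k_2)=\int_{\bbR^2}g(u,v)\,e^{i(k_1u+k_2v)}\,du\,dv=0\qquad\text{for every }(k_1,k_2)\in\bbR^2.
\end{align}
By the uniqueness theorem for the $L^1$ Fourier transform on $\bbR^2$, $g=0$ almost everywhere. Because the Gaussian weight $\tfrac{1}{\pi}e^{-(u^2+v^2)}$ is strictly positive, this forces $\tilde f=0$ a.e.\ on $\bbR^2$, and pulling back yields $f=0$ $d\chi$-a.e.

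The one step requiring care, and which I expect to be the main technical nuisance rather than a real obstacle, is the change of variables: the map is not a diffeomorphism at $r=0$ nor along the identification $\theta=0\sim 2\pi$, so one must verify that these form a $d\chi$-null set (and a Lebesgue-null set on the $\bbR^2$ side) so that the transport of measures and the integral identity above are valid. Everything else is a clean application of Fubini and the injectivity of the Fourier transform on $L^1(\bbR^2)$.
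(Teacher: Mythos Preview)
Your proposal is correct and follows essentially the same approach as the paper: the paper's proof also substitutes $x=\sqrt{r}\cos\theta$, $y=\sqrt{r}\sin\theta$ to rewrite the hypothesis as the vanishing of the Fourier transform of $e^{-(x^2+y^2)}f(\sqrt{x^2+y^2},\arg(x+iy))$ on $\bbR^2$ and then invokes Fourier uniqueness. Your write-up is in fact more careful than the paper's, which omits the Jacobian computation and the remark about the measure-zero sets where the substitution fails to be a diffeomorphism.
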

\begin{proof}
For $x = \sqrt{r} \cos \theta$ and $y=\sqrt{r} \sin \theta$, we obtain
\begin{align}
 \int_{\bbR^2} e^{-(x^2 + y^2)} f(\sqrt{x^2 + y^2}, \arg(x + iy))e^{i (k_1 x + k_2 y)} dx dy
 = 0.
\end{align}
Hence $f = 0$.
\end{proof}

\begin{lem}
 \begin{align}
  \overline{\mathrm{span}} \set{\WrholBEC(f) \OmegabBEC}{f \in \calDbone} = \FbBEC.
 \end{align}
\end{lem}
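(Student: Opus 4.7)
The plan is a duality argument: show that any $\Phi \in \FbBEC$ with $\bkt{\Phi}{\WrholBEC(f)\OmegabBEC}_{\FbBEC} = 0$ for every $f \in \calDbone$ must vanish. Using $\OmegabBEC(r,\theta) \equiv \OmegabAW$ and the definition of $\WrholBEC$, this orthogonality becomes
\begin{align}
\int_{\polarplane} e_f(r,\theta) \bkt{\Phi(r,\theta)}{\Wrhol(f)\OmegabAW}_{\FbAW} d\chi(r,\theta) = 0, \quad f \in \calDbone.
\end{align}

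The crucial observation is that $e_f$ depends on $f$ only through $z := \hat{f}(0)$; denote this factor by $\phi_z$. I fix $z \in \bbC$ and let $f$ range over the affine slice $\set{f \in \calDbone}{\hat{f}(0) = z}$, so that $\phi_z$ is held constant while $\Wrhol(f)\OmegabAW$ varies. Assuming (see the last paragraph) that these latter vectors are total in $\FbAW$, the vanishing of a continuous linear functional on a total set gives, for each $z \in \bbC$, the weak vector identity
\begin{align}
\int_{\polarplane} \overline{\phi_z(r,\theta)} \Phi(r,\theta) d\chi(r,\theta) = 0 \quad \text{in } \FbAW.
\end{align}

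Next I pair this identity with an arbitrary $\Psi_0 \in \FbAW$ and set $F_{\Psi_0}(r,\theta) := \bkt{\Psi_0}{\Phi(r,\theta)}_{\FbAW}$, which lies in $L^2(\polarplane, d\chi) \subset L^1(\polarplane, d\chi)$ since $\chi$ is a probability measure. I obtain
\begin{align}
\int_{\polarplane} \overline{\phi_z(r,\theta)} F_{\Psi_0}(r,\theta) d\chi(r,\theta) = 0, \quad z \in \bbC.
\end{align}
Writing $z = a + ib$ gives $\overline{\phi_z(r,\theta)} = \exp[i\sqrt{r}(k_1 \cos\theta + k_2 \sin\theta)]$ with $(k_1, k_2) = \sqrt{c(\brhob,\beta)}(-a, b)$, and as $(a,b)$ sweeps $\bbR^2$ so does $(k_1,k_2)$. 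The preceding lemma then forces $F_{\Psi_0} = 0$ $\chi$-a.e., so $\bkt{\Psi_0}{\Phi(r,\theta)}_{\FbAW} = 0$ $\chi$-a.e. Running $\Psi_0$ over a countable dense subset of $\FbAW$ and intersecting the resulting null sets yields $\Phi(r,\theta) = 0$ for $\chi$-almost every $(r,\theta)$, hence $\Phi = 0$ in $\FbBEC$.

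The main obstacle is the restricted Araki-Woods cyclicity invoked in the second paragraph: the classical statement gives totality of $\set{\Wrhol(f)\OmegabAW}{f \in \calDbone}$ in $\FbAW$, but I need totality after imposing $\hat{f}(0) = z$. I plan to reduce to the unconstrained case by showing that for every $g \in \calDbone$ one can construct a sequence $g_n \in \calDbone$ with $\hat{g_n}(0) = z$ and $g_n - g \to 0$ in every norm governing strong continuity of $\Wrhol$, namely in $L^2$, $\omega^{-1/2} L^2$, and $(1 - e^{-\beta\omega})^{-1/2} L^2$. Explicitly, $g_n := g + (z - \hat{g}(0)) h_n$ with $h_n(x) := n^{-d} h_0(x/n)$ for a fixed $h_0 \in \calDbone$ satisfying $\hat{h_0}(0) = 1$ works in dimension $d \geq 2$; the one-dimensional case requires a slightly different spreading that concentrates $h_n$ in Fourier space near (but avoiding) $k = 0$.
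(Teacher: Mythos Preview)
Your proof follows the same strategy as the paper's: orthogonality, slice by $\hat f(0)=z$, restricted Araki--Woods cyclicity on each slice to pass to an arbitrary vector in $\FbAW$, then the preceding Fourier-injectivity lemma on $\polarplane$. You are more careful than the paper on two points it glosses over (the countable dense exhaustion of $\FbAW$ to make the $\chi$-null set uniform, and the approximation of the slice in the correct graph norms rather than merely in $L^2$), but the architecture is identical.
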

\begin{proof}
Assume $\Psi \in \FbBEC$, $f \in \calDbone$, and $\bkt{\Psi}{\WrholBEC \OmegabBEC} = 0$.
Then we have
\begin{align}
 0=
 \int_{\polarplane} \bkt{\Psi(r, \theta)}{\Wrhol(f) \OmegabAW}
   e^{i \sqrt{c(\brhob, \beta) r} \Re (e^{i \theta} \hat{f}(0)) / 2} d \chi (r, \theta).
\end{align}
For $z \in \bbC$, set
\begin{align}
 \calD_{\mathrm{b}, 1, z}
 :=
 \set{f \in \calDbone}{\hat{f}(0) = z}.
\end{align}
This is dense in $L^2(\bbR^d; \bbC^{\Ni})$. From here we get
\begin{align}
 \overline{\mathrm{span}} \, \set{\Wrhol(f) \OmegabAW}{f \in \calD_{\mathrm{b}, 1, z}}
 =
 \FbAW.
\end{align}
Thus it follows that, for any $\Phi \in \FbAW$, $z \in \bbC$,
\begin{align}
 \int_{\polarplane} d \chi(r, \theta) \bkt{\Psi(r, \theta)}{\Phi} e^{i \sqrt{c(\brhob, \beta) r} \Re (e^{i \theta} z) / 2}
 = 0.
\end{align}
Putting $\sqrt{c(\brhob, \beta)} z = k_1 - i k_2$, for $k_1, k_2 \in \bbR$, we get
\begin{align}
 \int_{\polarplane} d \chi (r, \theta) \bkt{\Psi(r, \theta)}{\Phi} e^{i \sqrt{r} (k_1 \cos \theta + k_2 \sin \theta)}
 = 0.
\end{align}
By the previous lemma, it follows that $\bkt{\Psi(r, \theta)}{\Phi} = 0$ a.e. for all $\Phi \in \FbAW$.
\end{proof}

\begin{thm}
 The algebra $\mathrm{* \mathchar`-alg} \set{\WrholBEC (f)}{f \in \calDbone}$ is a GNS representation of $\calMbzero (\calDbone)$ for $\psibBEC$
 and its cyclic vector is $\OmegabBEC$.
\end{thm}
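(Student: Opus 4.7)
The plan is to verify the three defining properties of a GNS triple in sequence: (i) the assignment $W(f) \mapsto \WrholBEC(f)$ extends to a $*$-representation of the algebra $\calMbzero$ on $\FbBEC$, (ii) the state $\psibBEC$ coincides with the vector state at $\OmegabBEC$, and (iii) the vector $\OmegabBEC$ is cyclic.

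For (i), I would compute $\WrholBEC(f)\WrholBEC(g)$ pointwise on $\FbBEC = L^2(\polarplane, d\chi; \FbAW)$. Because $e_g(r,\theta)$ is a scalar multiplier, it commutes past the operator $\Wrhol(f)$, and unfolding the definition twice gives $(\WrholBEC(f)\WrholBEC(g)\Psi)(r,\theta) = e_f(r,\theta)\, e_g(r,\theta)\, \Wrhol(f)\Wrhol(g)\Psi(r,\theta)$. The additivity $\widehat{f+g}(0) = \hat{f}(0) + \hat{g}(0)$ yields $e_{f+g} = e_f \cdot e_g$ as scalar functions, so $\WrholBEC$ inherits the Weyl (CCR) relation from the Araki-Woods operators $\Wrhol$ on $\FbAW$ verbatim, with the same symplectic cocycle. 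A one-line computation using $\overline{e_f(r,\theta)} = e_{-f}(r,\theta)$ and $\Wrhol(f)^* = \Wrhol(-f)$ similarly gives $\WrholBEC(f)^* = \WrholBEC(-f)$. Since each $\WrholBEC(f)$ is a fiberwise unitary and therefore bounded, the map extends linearly and multiplicatively to a $*$-representation of $\calMbzero$.

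For (ii), combining the standard Araki-Woods vacuum identity $\bkt{\OmegabAW}{\Wrhol(f)\OmegabAW} = e^{-q_1(f)/4}$ with the Bessel identity of the preceding lemma, $\int_{\polarplane} e_f(r,\theta)\, d\chi(r,\theta) = e^{-q_0(f)/4}$, and using that $\OmegabBEC(r,\theta) \equiv \OmegabAW$ is constant in $(r,\theta)$, one obtains $\bkt{\OmegabBEC}{\WrholBEC(f)\OmegabBEC} = e^{-(q_0(f)+q_1(f))/4} = \psibbecbfr(W(f)) = \psibBEC(W(f))$. Property (iii) is exactly the content of the lemma proved immediately above the theorem statement. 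The main obstacle is really just the bookkeeping in step (i); once one observes that the scalar cocycle $e_f$ is additive in $f$ and therefore contributes no symplectic phase, everything reduces to the known Weyl structure of the Araki-Woods representation, and the theorem follows by assembling the three ingredients.
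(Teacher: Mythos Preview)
Your proposal is correct and matches the paper's approach: the paper states the theorem without proof, treating it as immediate from the preceding material, namely the already-displayed identity $\psibBEC(W(f)) = \bkt{\OmegabBEC}{\WrholBEC(f)\OmegabBEC}$ (your step (ii)) and the cyclicity lemma just proved (your step (iii)), with the representation property (your step (i)) left implicit. Your write-up simply makes explicit the routine verification that the scalar multiplier $e_f$ is additive in $f$ and unimodular, so that the Weyl relations and $*$-structure are inherited fiberwise from $\Wrhol$; this is exactly the check the paper silently assumes.
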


For $(r, \theta) \in \polarplane$ and $f \in \calDbone$ we define operators
\begin{align}
 W_{\beta, \mathrm{l}}^{(r, \theta)}(f)
 :=
 e_f(r, \theta) \Wbl(f).
\end{align}
The following representation $\pi_{\mathrm{b}}^{r, \theta}$ is a GNS representation of $\calMbzero (\calDbone)$
for the state $\psibfrbrtheta$, and its cyclic vector is $\OmegabAW$:
\begin{align}
 \pi_{\mathrm{b}}^{r, \theta} \rbk{W(f)}
 :=
 W_{\beta, \mathrm{l}}^{(r, \theta)} (f).
\end{align}

\begin{prop}
 Assume $(r, \theta), (r', \theta') \in \polarplane$, and $(r, \theta) \neq (r', \theta')$.
 Then $\pi_{\mathrm{b}}^{r, \theta}$ and $\pi_{\mathrm{b}}^{r', \theta'}$ are mutually unitary-inequivalent.
\end{prop}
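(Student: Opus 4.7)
The plan is to proceed by contradiction, exploiting the coherent scalar cocycle that distinguishes the two representations. Suppose $U \colon \FbAW \to \FbAW$ is a unitary with $U \pi_{\mathrm{b}}^{r,\theta}(A) U^{-1} = \pi_{\mathrm{b}}^{r',\theta'}(A)$ for every $A \in \calMbzero$. Since the Araki--Woods Weyl operator $\Wbl(f)$ is common to both representations, the intertwining applied to $W(f)$ reduces to the scalar cocycle relation
\begin{align*}
 U \Wbl(f) U^{-1} = e^{i \eta(f)} \Wbl(f), \quad f \in \calDbone, \quad \eta(f) := \Re\bigl[z\, \hat{f}(0)\bigr],
\end{align*}
where $z := \sqrt{c(\brhob,\beta) r'}\, e^{i\theta'} - \sqrt{c(\brhob,\beta) r}\, e^{i\theta}$. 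Setting $\Psi := U^{-1} \OmegabAW$ and taking the vacuum expectation of both sides yields the characteristic functional identity
\begin{align*}
 \bkt{\Psi}{\Wbl(f) \Psi} = e^{i \eta(f)} \bkt{\OmegabAW}{\Wbl(f) \OmegabAW}, \quad f \in \calDbone.
\end{align*}

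The contradiction is infrared in nature. Because $\omega_0 > \mub$, the density $\rho = (e^{\beta(\omega - \mub)} - 1)^{-1}$ is a bounded multiplication operator on $\calHb$, so both $(1+\rho)^{1/2}$ and $\rho^{1/2}$ are bounded and the Weyl map $f \mapsto \Wbl(f)$ is strongly continuous in the $L^2$-norm. Hence $\Wbl(f_n) \to 1$ strongly whenever $\norm{f_n}_{L^2} \to 0$, giving $\bkt{\Psi}{\Wbl(f_n)\Psi} \to \norm{\Psi}^2 = 1$ and $\bkt{\OmegabAW}{\Wbl(f_n)\OmegabAW} \to 1$. By contrast, the functional $f \mapsto \hat{f}(0) = (2\pi)^{-d/2} \int f\, dx$ is not $L^2$-continuous: I fix a unit vector $v \in \bbC^{\Ni}$ and choose complex scalars $c_n$ so that $f_n(x) := c_n \chi_{\cbk{|x| \leq n}}(x)\, v$ satisfies $\eta(f_n) \equiv 1$. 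This forces $\abs{c_n} = O(n^{-d})$, hence $\norm{f_n}_{L^2} = O(n^{-d/2}) \to 0$. Plugging into the identity, the right-hand side converges to $e^{i} \cdot 1 = e^{i} \neq 1$, contradicting the left-hand side.

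The main technical hurdle is verifying two subsidiary claims. First, the non-degenerate case $z \neq 0$ must really be the only one covered by the hypothesis: $z = 0$ forces $r = r' = 0$ (the other possibility $(r,\theta) = (r',\theta')$ being excluded by assumption), and in that event $e_f \equiv 1$ so that the two representations literally coincide, making the proposition vacuous on the diagonal $r = 0$. Second, $f_n$ must lie in $\calDbone = L^1 \cap \dom \omega^{-1/2} \cap \dom(1 - e^{-\beta\omega})^{-1/2}$; under $\omega \geq \omega_0 > \mub$ both $\omega^{-1/2}$ and $(1 - e^{-\beta\omega})^{-1/2}$ are bounded multiplication operators on $L^2$, so the last two domain conditions are automatic, and the $L^1$-condition is trivial because $f_n$ is compactly supported and bounded.
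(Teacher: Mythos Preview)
Your route differs from the paper's. The paper's argument simply \emph{assumes} the intertwining unitary fixes the vacuum, $U\OmegabAW=\OmegabAW$, takes the vacuum expectation to obtain $e_f(r,\theta)=e_f(r',\theta')$ for all $f\in\calDbone$, and then varies $\hat f(0)$ to force $(r,\theta)=(r',\theta')$. That is shorter but, as stated, it only excludes unitary equivalences that preserve the cyclic vector; your cocycle-plus-sequence approach is aiming at the honest statement and avoids that extra hypothesis.

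There is, however, a genuine gap in your argument. You claim that because $\omega\geq\omega_0>\mub$ the operators $\rho=(e^{\beta(\omega-\mub)}-1)^{-1}$, $\omega^{-1/2}$ and $(1-e^{-\beta\omega})^{-1/2}$ are bounded. The inequality $\omega_0>\mub$ does \emph{not} give $\omega_0>0$: the paper allows $\omega_0=0$ (massless phonons, the infrared-divergent regime in the title), which is exactly why $\calDbone$ carries the explicit domain restrictions $\dom\omega^{-1/2}\cap\dom(1-e^{-\beta\omega})^{-1/2}$ and why $\lambda_x^\kappa\in\dom\omega^{-1/2}$ is listed as a nontrivial assumption. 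In that case the Araki--Woods density underlying $\Wbl$ is unbounded at $k=0$, so strong continuity of $f\mapsto\Wbl(f)$ in the bare $L^2$-norm fails: you need $\lnorm{\rho^{1/2}f_n}\to0$ as well. For your dilated indicators $f_n=c_n\chi_{\{|x|\leq n\}}v$ with $|c_n|\sim n^{-d}$, the Fourier transforms $\hat f_n$ concentrate at $k=0$ precisely where $\rho$ blows up, so this limit is not automatic and in low dimension can fail outright. The same issue undermines your check that $f_n\in\calDbone$: neither $\omega^{-1/2}f_n\in L^2$ nor $(1-e^{-\beta\omega})^{-1/2}f_n\in L^2$ is automatic when $\omega_0=0$. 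To rescue the argument you would have to estimate $\int\rho(k)|\hat f_n(k)|^2\,dk$ directly using the paper's integrability hypothesis $\int_{|k|\leq1}(\omega(k)-\omega_0)^{-1}\,dk<\infty$, and verify membership in $\calDbone$ by hand.
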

Suppose there exists a unitary $U$ such that
\begin{align}
 U \OmegabAW
 =
 \OmegabAW, \quad
 U \pi_{\mathrm{b}}^{r, \theta} \rbk{W(f)} U^{-1}
 =
 \pi_{\mathrm{b}}^{r', \theta'} \rbk{W(f)}, \quad f \in \calDbone.
\end{align}
Take the vacuum expectation, and then we have $e_f(r, \theta) = e_f(r', \theta')$.

Next set $\sqrt{c(\brhob, \beta) \hat{f} (0)} = k - ip, k, p \in \bbR$.
Then it follows that
\begin{align}
 \sqrt{r} (k \cos \theta + p \sin \theta)
 =
 \sqrt{r'} (k \cos \theta' + p \sin \theta') + 2 \pi n, \quad n \in \bbZ.
\end{align}
Since $k, p$ is real,the followings must hold:
\begin{align}
 n = 0, \quad
 \sqrt{r} \cos \theta
 =
 \sqrt{r'} \cos \theta', \quad
 \sqrt{r} \sin \theta
 =
 \sqrt{r'} \sin \theta'.
\end{align}
As $\theta$ and $\theta$ are in $[0, 2 \pi]$, we get $r = r'$ and $\theta = \theta'$.

\begin{rem}
 Using the constant fiber direct integral we get the following decompositions:
 \begin{align}
  \FbBEC
  &=
  \int_{\polarplane}^{\oplus} \FbAW d \chi (r, \theta), \\
  \WbBEC (f)
  &=
  \int_{\polarplane}^{\oplus} \Wrhol^{r, \theta} (f) d \chi (r, \theta).
 \end{align}
 From this we summarize this section as follows:
 The GNS representation of *-algebra $\calMbzero (\calDbone)$ on the BEC state is given by
 the cyclic represations of mutually disjoint representations.
\end{rem}
\section{References}
\label{sec-7}

\end{document}